\definecolor{yellowcustom}{RGB}{200, 200, 0}
\definecolor{BrickRed}{RGB}{175, 50, 50}
\definecolor{Plum}{RGB}{142, 69, 133}
\theoremstyle{definition}
\newtheorem{proposition}{Proposition}
\newtheorem{lemma}{Lemma}
\newtheorem{theorem}{Theorem} 
\newtheorem{remark}{Remark}
\newtheorem{assumption}{Assumption} 
\newcommand{\traj}{\bm{x}}
\newcommand{\ctrl}{\bm{u}}
\newcommand{\Lx}{L^{x}}
\newcommand{\Lout}{L^{\text{out}}}
\newcommand{\Lin}{L^{\text{in}}}
\newcommand{\Lio}{L^{\text{io}}}
\newcommand{\ball}{\text{B}}
\newcommand{\rect}{\text{Rect}}
\newcommand{\pibackup}{\pi^{\text{backup}}}
\newcommand{\pisafety}{\pi^{\text{safe}}}
\newcommand{\piexploration}{\pi^{\text{exp}}}
\newcommand{\pisafeexploration}{\pi}
\newcommand{\safetythreshold}{\epsilon}
\newcommand{\ntraj}{n^{\textnormal{traj}}}
\newcommand{\niter}{n^{\textnormal{iter}}}
\newcommand{\target}{l}
\newcommand{\unsafefcn}{g}
\newcommand{\unsafeset}{\mathcal{X}_{\mathcal{U}}}
\newcommand{\targetset}{\mathcal{X}_{\mathcal{T}}}
\newcommand{\norm}[1]{\left\lVert #1\right\rVert}
\newcommand{\dataset}{\mathcal{D}}
\newcommand{\safeset}{S}
\newcommand{\HRule}{\noindent\rule{\linewidth}{0.1mm}\newline}
\newcommand{\R}{\mathbb{R}}
\newcommand{\xdim}{n_x}
\newcommand{\udim}{n_u}
\newcommand{\uset}{U}
\newcommand{\datasetu}{\mathcal{D}_{u}}
\newcommand{\datasetv}{\mathcal{D}_{v}}
\newcommand{\tildeU}{\datasetu}
\newcommand{\ctrlsetv}{\datasetv}
\newcommand{\datadrivenV}{\widehat{V}}
\newcommand{\ddV}{\datadrivenV}
\newcommand{\tildeV}{\widetilde{V}}
\newcommand{\tildetraj}{\widetilde{\bm{x}}}
\newcommand{\tildectrl}{\widetilde{\bm{u}}}
\newcommand{\brt}{Avoid}
\newcommand{\reachavoidbrt}{ReachAvoid}
\newcommand{\ctrlv}{\bm{v}}
\newcommand{\ctrlw}{\bm{w}}
\newcommand{\hattraj}{\widehat{\bm{x}}}
\newcommand{\hatctrl}{\widehat{\bm{u}}}
\newcommand{\hatvi}{\widehat{v}^{i}}
\newcommand{\hatV}{\widehat{V}}
\newcommand{\truev}{\widetilde{v}^i}
\newcommand{\uncertainset}{\mathcal{E}}
\newcommand{\ddh}{\widehat{H}}
\DeclareMathOperator*{\argmax}{arg\,max}
\DeclareMathOperator*{\argmin}{arg\,min}
\newcommand{\airspeed}{\mathrm{v}}
\title{\LARGE \bf Data-Driven Hamiltonian for Direct Construction of \\ Safe Set from Trajectory Data}
\author{Jason J. Choi$^{*1}$, Christopher A. Strong$^{*1}$, Koushil Sreenath$^{1}$, Namhoon Cho\textsuperscript{\textdagger2}, and Claire J. Tomlin\textsuperscript{\textdagger1}% <-this % stops a space
\thanks{This is the extended version of the article.}
\thanks{* Equal first authorship.\;\;\textdagger~Equal advising.}% <-this % stops a space
\thanks{$^{1}$Jason J. Choi, Christopher A. Strong, Koushil Sreenath, and Claire J. Tomlin are with the University of California, Berkeley, CA 94720 USA.
        {\tt\small jason.choi, christopher\_strong@berkeley.edu}}%
\thanks{$^{2}$Namhoon Cho is with the Centre for Assured and Connected Autonomy at Cranfield University, United Kingdom.
}%
\thanks{This research is supported in part by the NASA ULI on Safe Aviation Autonomy, NSF Safe Learning-Enabled Systems, and the ONR LEARN projects. The work of Jason J. Choi received the support of a
fellowship from Kwanjeong Educational Foundation, Korea.}
}
\begin{document}

\maketitle
\thispagestyle{empty}
\pagestyle{empty}

%%%%%%%%%%%%%%%%%%%%%%%%%%%%%%%%%%%%%%%%%%%%%%%%%%%%%%%%%%%%%%%%%%%%%%%%%%%%%%%%
\begin{abstract}
In continuous-time optimal control, evaluating the Hamiltonian requires solving a constrained optimization problem using the system's dynamics model. Hamilton-Jacobi reachability analysis for safety verification has demonstrated practical utility only when efficient evaluation of the Hamiltonian over a large state-time grid is possible. In this study, we introduce the concept of a data-driven Hamiltonian (DDH), which circumvents the need for an explicit dynamics model by relying only on mild prior knowledge (e.g., Lipschitz constants), thus enabling the construction of reachable sets directly from trajectory data. Recognizing that the Hamiltonian is the optimal inner product between a given costate and realizable state velocities, the DDH estimates the Hamiltonian using the worst-case realization of the velocity field based on the observed state trajectory data. This formulation ensures a conservative approximation of the true Hamiltonian for uncertain dynamics. The reachable set computed based on the DDH is also ensured to be a conservative approximation of the true reachable set. Next, we propose a data-efficient safe experiment framework for gradual expansion of safe sets using the DDH. This is achieved by iteratively conducting experiments within the computed data-driven safe set and updating the set using newly collected trajectory data. To demonstrate the capabilities of our approach, we showcase its effectiveness in safe flight envelope expansion for a tiltrotor vehicle transitioning from near-hover to forward flight.
\end{abstract}

\section{Introduction}

Data-driven safety verification is needed to ensure the safety of various real-world systems that involve uncertain dynamics. Such uncertainty, which may be impossible to model or require extensive modeling efforts, can result from various sources---complex aerodynamics in unconventional aircraft, robot manipulation of non-rigid objects, learning-enabled components in an autonomy stack, interaction with unstructured environments, and many others. 

%%-------------------------------------------------------------
Hamilton-Jacobi (HJ) reachability, rooted in model-based optimal control, provides a rigorous and flexible framework for verifying safety by computing the maximal safe set within constraints, as well as an associated safe policy \cite{bansal2017hamilton}. 
However, its practical applicability is limited by the need to evaluate the Hamiltonian, which is the optimal directional derivative of the value function along the dynamics. Evaluating the Hamiltonian involves solving a constrained optimization problem in the control input space. Consequently, both the lack of an accurate dynamics model as well as any computational challenges in the optimization can hinder the implementation of HJ reachability analysis in practice.

%%-------------------------------------------------------------
Existing data-driven approaches address these challenges \textit{indirectly} by using supervised learning to approximate components needed to evaluate the Hamiltonian. For instance, \cite{Fisac2018} employs Gaussian processes to learn the dynamics, while \cite{chilakamarri2024reachability} fits a neural network to approximate the Hamiltonian. However, these methods rely heavily on the quality of the learned model and may introduce inefficiencies due to misalignment between model learning and safety verification. Alternative data-driven methods outside the HJ framework primarily focus on forward reachability \cite{devonport2021data, djeumou2021fly, alanwar2023data, lew2021sampling}.

%%-------------------------------------------------------------
We propose a new data-driven approach by approximating the Hamiltonian used in HJ reachability \textit{directly} from data. The constrained optimization involved in computing the original Hamiltonian is replaced with an efficient data-enabled approximation, which computes its best approximation under the worst-case realization of the dynamics inferred from the data. Using this approximate Hamiltonian in HJ reachability allows the safe set to be computed directly from trajectory data without any intermediate supervised learning step.

%%-------------------------------------------------------------
Our main contributions are summarized below:
\begin{itemize}[leftmargin=1.25em, labelindent=\parindent, listparindent=\parindent, labelwidth=0em]
    \item We propose the \textit{data-driven Hamiltonian} (DDH), a novel concept enabling \textit{direct} data-driven reachability analysis for uncertain dynamics under mild assumptions of knowledge of the system's Lipschitz constants.
    \item We prove that using this DDH in HJ reachability computations guarantees a conservative approximation of the true safe set.
    \item We present a safe set expansion framework built upon the DDH-based reachability method, which can iteratively update a data-driven safe set while running experiments that safely collect more data.
    \item We demonstrate the effectiveness of our approach in safe longitudinal flight envelope expansion for a tiltrotor vehicle transitioning from near-hover to forward flight.
\end{itemize}

\textit{Notations.}
The norm $\norm{\cdot}$ being used is the $l_2$ norm, and $|\cdot|$ denotes an absolute value. An $l_2$ hypersphere centered at the origin with radius $r$ will be notated as $\ball(r) = \{x \mid \norm{x} \le r \}$. A hyperrectangle centered at the origin with element-wise radii $r\!\in\!\R^{\xdim}$ will be denoted as $\rect(r) \!=\!\{x \mid |x_j| \le r_j, \forall j\!=\!1, \cdots, \xdim \}$. The symbol $\oplus$ indicates the Minkowski sum of two sets. The superscript $i$ denotes the index of a data point and the subscript $j$ denotes the $j$-th element of a vector unless noted otherwise.

\section{Problem Formulation \& Background}

\subsection{Problem Formulation}
\label{subsec:problem}

We consider nonlinear system dynamics
\begin{equation}
    \dot{\traj}(s)\!=\!f(\traj(s), \ctrl(s)) \;\; \text{for} \; s\!\in\![-t, 0], \quad \traj(-t) = x,
\label{eq:dynsys}
\end{equation}
with state $\traj(s) \in \mathbb{R}^{\xdim}$, control $\ctrl(s) \in \uset \subset \mathbb{R}^{\udim}$, and initial state $x$ at time $-t$, where $t\!>\!0$ and $\uset$ is the control input set. The vector field $f$ is assumed to be Lipschitz continuous in the state, which is required for the forward completeness of the trajectory \cite{sastry2013nonlinear}.
We consider various forms of Lipschitz continuity:

\noindent (i) uniform Lipschitz constant $\Lx$, satisfying
\begin{equation}
    \norm{f(x, u) - f(x', u)} \le \Lx \norm{x - x'},
    \label{eq:lipschitz-uniform}
\end{equation}
\noindent (ii) input-element-wise constant vector $\Lin$ satisfying
\vspace{-0.5em}
\begin{equation}
    \norm{f(x, u) - f(x', u)} \le \sum_{j=1}^{\xdim} \Lin_j \lvert x_j - x_j' \rvert 
    \label{eq:lipschitz-input}
\vspace{-1em}
\end{equation}
(iii) output-element-wise constant vector $\Lout$ satisfying for $i\!=\!1,\!\cdots\!,\!\xdim,$     \vspace{-0.5em}
\begin{equation}
    \lvert f_i(x, u) - f_i(x', u) \rvert \le \Lout_i \norm{x - x'}     
    \label{eq:lipschitz-output}
    \vspace{-0.25em}
\end{equation}
\noindent (iv) $\xdim \!\times\!\xdim$ sensitivity matrix $\Lio$ satisfying for $i\!=\!1,\!\cdots\!,\!\xdim,$
\vspace{-0.5em}
\begin{equation}
    \lvert f_i(x, u) - f_i(x', u) \rvert \le \sum_{j=1}^{\xdim} L_{ij}^{\text{io}} \lvert x_j - x'_j \rvert
    \label{eq:lipschitz-matrix}
    \vspace{-1.0em}
\end{equation}
for all $x, x' \in \mathcal{X}, u \in \mathcal{U}.$

While the dynamics $f$ itself is deemed uncertain and unknown, we assume that (i) a dataset of trajectories from this uncertain system is given as $\dataset=\{(x^i, u^i, v^i)\}_{i=1}^N,$ where $v^i := f(x^i, u^i)$ denotes ``state velocity'', and (ii) we know at least one of the Lipschitz constants of $f$ above. The dataset is collected from experiments, where the state, control, and state velocity are sampled at various time steps of the trajectories of \eqref{eq:dynsys}.
The Lipschitz constants may come from prior knowledge or can be estimated directly from the dataset $\dataset$. While access to a valid constant indicates that our method is not completely model-free and requires basic system knowledge, the required modeling effort is significantly less than accurately characterizing the full dynamics $f$.

The safety specification is given as an unsafe region in the state space we want to avoid, denoted as $\unsafeset$. We consider two notions of a safe set:

\noindent \textit{1. Avoid Backward Reachable Tube (BRT):} The $\brt$ BRT is the set of initial states from which the system can avoid reaching the unsafe set $\unsafeset$ over the time horizon $t$:
\vspace{-0.5em}
\begin{align*}
        \brt(t; \unsafeset) \!= \!\{ & x\!\in\!\mathbb{R}^{\xdim} \mid \exists \ctrl(\cdot)\;\text{s.t.}\;\forall s\!\in\![-t, 0], \traj(s)\!\notin\!\unsafeset \}.
\vspace{-1em}
\end{align*}

\noindent \textit{2. Reach-Avoid BRT:} The $\reachavoidbrt$ BRT is the set of initial states from which the system can reach a target set, specified as $\targetset$, while avoiding the unsafe set $\unsafeset$:
\vspace{-0.5em}
\begin{align*}
    &\reachavoidbrt(t; \targetset, \unsafeset) \!= \!  \left\{x \in \mathbb{R}^{\xdim} \mid \exists \ctrl(\cdot) \; \text{s.t.} \right. \\ 
    &\qquad \left. \exists s \in [-t, 0], \traj(s) \in \targetset \; \& \; \forall \tau \in [-t, s], \traj(\tau) \notin \unsafeset  \right\}.
\vspace{-0.5em}
\end{align*} 
Here, $s$ corresponds to the time at which the system reaches $\targetset$, and $\tau$ indexes over the previous times to ensure that the system does not enter the unsafe set before reaching $\targetset$.

Our goal is to answer two questions: \textbf{(i) Safe set construction from data}: how can we estimate these safe sets directly from the trajectory data of the uncertain dynamical system? \textbf{(ii) Safe experiment design}: how can we design a sequence of safe experiments that gathers data safely in order to expand the safe set?

\subsection{Hamilton-Jacobi Reachability}
\label{subsec:HJR}
HJ reachability encodes the safety problem by first defining a value function whose sign indicates which states are included in the reachable set, and then uses dynamic programming to compute this value function. In this work, we focus on solving the two backward reachability problems described in Section \ref{subsec:problem}, and other standard reachability problems are detailed in \cite{herbert2020safe}. We first represent the unsafe and target sets as level sets of Lipschitz continuous functions, $\unsafefcn(\cdot)$ and $\target(\cdot)$, such that 
\begin{equation}
    \unsafeset = \{x \mid \unsafefcn(x) \le 0 \}, \;\; \targetset = \{x \mid \target(x) \ge 0 \}.
    \label{eq:unsafe-set}
\end{equation}

\noindent Then, we can define the value functions whose zero-superlevel sets represent the $\brt$ and $\reachavoidbrt$ BRTs:

\HRule
Value function for $\brt$ BRT:
\begin{align}
    \label{eq:brt-value}
    V(x, t)  :&\!\!= \sup_{\ctrl(\cdot)} \min_{s \in [-t, 0]}    \unsafefcn(\traj(s))  \\ 
    & \Rightarrow \brt(t; \unsafeset) =\{x \mid V(x, t) \ge 0 \}. \quad\quad\quad\quad \nonumber
\end{align}
Value function for $\reachavoidbrt$ BRT:
\begin{align}
    \label{eq:brat-value}
    V(x, t) :&\!\!= \sup_{\ctrl(\cdot)} \max_{s \in [-t, 0]} \min \!\left\{ \target(\traj(s)), \min_{\tau \in [-t, s] } \unsafefcn(\traj(\tau))\! \right\}\!\!\! \\
    & \Rightarrow \reachavoidbrt(t; \targetset, \unsafeset) =\{x \mid V(x, t) \ge 0 \}. \nonumber
\end{align}
\hrule
\vspace{0.75em}

\noindent The optimization problem in \eqref{eq:brt-value} seeks a control that maximizes the closest distance to the unsafe set boundary, and in \eqref{eq:brat-value}, a control that minimizes the distance to the target set while avoiding $\unsafeset$.

By applying the dynamic programming principle, the value functions become the viscosity solutions \cite{bardi1997optimal} to the following HJ partial differential equations (PDEs) that are in the variational inequality (VI) form \cite{fisac2015reach}:

\HRule
HJ-VI for $\brt$ BRT:
\begin{equation}
\label{eq:brt-vi}
    0\!=\!\min \left\{\unsafefcn(x)\!- \!V(x, t),\; -D_t V(x, t)\!+\!H(x, D_x V(x, t)) \right\},\;\;\;\;\;
\vspace{-0.5em}
\end{equation}
with terminal condition $V(x, 0) = \unsafefcn(x)$.
\vspace{1em}

\noindent HJ-VI for $\reachavoidbrt$ BRT:
\begin{align}
\label{eq:brat-vi}
    0\!=\!& \min \Big\{\unsafefcn(x)\! -\! V(x, t), \\ 
    & \;\max \{ \target(x)\!-\!V(x, t),  -D_t V(x, t)\!+\!H(x, D_x V(x, t)) \} \Big\}, \nonumber
\vspace{-0.25em}
\end{align}
with terminal condition $V(x, 0) = \min\{\target(x), \unsafefcn(x)\}$.
\vspace{0.5em}
\hrule
\vspace{0.75em}
\noindent In \eqref{eq:brt-vi} and \eqref{eq:brat-vi}, the \textit{Hamiltonian} $H(x, p)$ is defined as  
\begin{equation}
    H(x, p) = \max_{u \in U} p^\top f(x, u)
    \label{eq:hamiltonian}.
\end{equation}
To solve \eqref{eq:hamiltonian} directly, we require knowledge of the dynamics $f$. Even when $f$ is known, \eqref{eq:hamiltonian} may be a nonconvex optimization problem, without further restrictive assumptions such as the control-affineness of $f$ and the convexity of $U$.

\begin{remark}
\label{remark:infinite horizon} The BRTs guarantee the safety constraint $\traj(s)\!\notin\! \unsafeset$ only for a finite horizon. Two measures can be taken to guarantee safety for an indefinite horizon. First, we can compute the $\brt$ BRT for a sufficiently long horizon until the BRT converges to the maximal control invariant set in $\unsafeset^c$. However, to avoid the issue of discontinuity or non-uniqueness of the HJ-VI solution, a discount factor must be introduced to the value function \cite{akametalu2018minimum, xue2018reach, choi2023forward}. An alternative approach is to design the target set in the $\reachavoidbrt$ BRT as a control invariant set, which results in the finite-time $\reachavoidbrt$ BRT also being control invariant. We employ the second approach in \cref{sec:experiment_design} for the iterative safe set expansion algorithm.
\end{remark}

\section{Data-driven Hamiltonian}
\label{sec:ddh}
Our goal is to find a data-driven estimate of the true reachable sets while providing rigorous guarantees on the states included within these sets. Towards this end, we first present the concept of the \textit{data-driven Hamiltonian (DDH)}, which is a lower bound of the Hamiltonian in \eqref{eq:hamiltonian} constructed using the collected trajectory data.
We then prove that computing the value function based on DDH yields a conservative estimate (inner-approximation) of the BRTs.

\subsection{Concept}
\label{subsec:vfb-rep}
The general idea in our approach is to represent the explicit dynamics abstractly as a state velocity vector $v:=f(x, u)$, and adopt a geometric viewpoint of the reachability problem.
We define the \textit{vector field bound} (VFB) as the set of possible velocities at a given state $x \in \R^{\xdim}$:
\begin{equation}
\label{eq:vfb}
    F(x) = \{ f(x, u)\;|\; u \in U\}.
\end{equation}
Under this abstraction, the dynamics in \eqref{eq:dynsys} can be equivalently represented as the differential inclusion \cite{goebel2009hybrid},
\begin{equation}
    \dot{\traj}(s) \in F(\traj(s)),
    \label{eq:dynsys_cbf}
\end{equation}
and the Hamiltonian in \eqref{eq:hamiltonian} can be written as
\begin{equation}
    H(x, p) := \max_{v \in F(x)} p^\top v,
    \label{eq:hamiltonian-vfb}
\end{equation}
where the objective function in \eqref{eq:hamiltonian-vfb} becomes a linear objective in $v$. We denote $v^*:=\argmax_{v \in F(x)} p^\top v$.

Next, given a single observation in our dataset $\dataset$, $(x^i, u^i, v^i)$, we reason about what information we have at the state $x$. Let the \textit{true velocity} at the state $x$ resulting from the observed control $u^i \in U$ be
\[\truev := f(x, u^i) \; \; \in\!F(x).
\]
Since we have observed $v^i = f(x^i, u^i)$, we can construct an uncertain estimate of $\truev$ around $v^i$ by considering the notion of an \textit{uncertainty set} $\uncertainset(x; x^i)$. This set bounds how much the velocity could have changed between $x^i$ and $x$ %such that 
to satisfy the following requirement:
\begin{assumption}[Valid Uncertainty Sets]
\label{assumption:uncertainty}
The uncertainty set, represented as a set-valued map $\uncertainset:\R^{\xdim} \times \R^{\xdim} \rightarrow 2^{\R^{\xdim}}$, whose output is a closed set in the velocity space, satisfies
\begin{equation}
\label{eq:uncertainty-set}
\truev - v^i \in \uncertainset(x; x^i)   
\end{equation}
or equivalently,
% \begin{equation}
    $\truev \in v^i \oplus \uncertainset(x; x^i)$,
% \end{equation}
for all $(x^i, u^i, v^i) \in \dataset$ and $x \in \R^{\xdim}$.
\end{assumption}

Under this assumption, we construct a lower bound on the Hamiltonian by evaluating the minimum of the linear objective $p^\top v$ over the uncertainty set around each observation. This yields our proposed \textit{DDH}:

\HRule
\textbf{Data-driven Hamiltonian (DDH):}
\begin{equation}
\label{eq:ddh_general}
    \widehat{H}(x, p) := \max_{i \in \{1, \cdots, N\}} \min_{\hatvi \in v^i \oplus \uncertainset(x; x^i)} p^\top \hatvi
\end{equation}
\hrule
\vspace{0.75em}
\noindent 
The $\min$ operation considers the worst-case realization of the uncertainty associated with each data point and the $\max$ operation reasons about what data point provides the best estimate of the Hamiltonian despite the uncertainty.

Notice that for all $i\!\in\!\{1, \cdots, N\}$, since $\truev \in v^i \oplus \uncertainset(x; x^i)$ and also $\truev \in F(x)$,
\[
    % \min_{\hatvi \in v^i \oplus \uncertainset(x; x^i)} p^\top \hatvi \le p^\top \truev,
    \min_{\hatvi \in v^i \oplus \uncertainset(x; x^i)} p^\top \hatvi \le p^\top \truev \le \max_{v \in F(x)} p^\top v = H(x, p).
\]
This yields the following proposition:

\begin{proposition}
\label{thm:ddh lower bound}
If $\uncertainset$ satisfies Assumption \ref{assumption:uncertainty}, the DDH is a guaranteed lower bound of the true Hamiltonian:
\begin{equation}
\label{eq:ham_comparison} 
    \widehat{H}(x, p) \le H(x, p).
\end{equation}
\end{proposition}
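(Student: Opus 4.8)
The statement is essentially an immediate consequence of Assumption~\ref{assumption:uncertainty} together with the fact that each logged control is admissible, so the plan is short. Fix an arbitrary state $x \in \R^{\xdim}$ and costate $p \in \R^{\xdim}$, and consider any index $i \in \{1, \dots, N\}$. First I would record the two membership facts about the true velocity $\truev = f(x, u^i)$. On one hand, since the dataset is collected with $u^i \in U$, the definition \eqref{eq:vfb} of the vector field bound gives $\truev = f(x,u^i) \in F(x)$. On the other hand, Assumption~\ref{assumption:uncertainty} applied to the data point $(x^i, u^i, v^i) \in \dataset$ and the query state $x$ gives $\truev \in v^i \oplus \uncertainset(x; x^i)$.

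From the second membership, $\truev$ is a feasible point of the inner minimization in \eqref{eq:ddh_general}, hence
\[
    \min_{\hatvi \in v^i \oplus \uncertainset(x; x^i)} p^\top \hatvi \;\le\; p^\top \truev .
\]
This comparison does not even require the minimum to be attained (it holds for the infimum), though attainment does follow from closedness of $\uncertainset(x;x^i)$ when the set is additionally bounded. From the first membership, $p^\top \truev \le \max_{v \in F(x)} p^\top v = H(x,p)$ by \eqref{eq:hamiltonian-vfb}. Chaining the two inequalities yields, for every $i \in \{1,\dots,N\}$,
\[
    \min_{\hatvi \in v^i \oplus \uncertainset(x; x^i)} p^\top \hatvi \;\le\; H(x,p) .
\]
Since the right-hand side is independent of $i$ and the index set is finite, taking the maximum over $i$ on the left preserves the inequality, which is precisely $\widehat{H}(x,p) \le H(x,p)$; as $x$ and $p$ were arbitrary, \eqref{eq:ham_comparison} follows.

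There is no genuine obstacle here; the only thing to be careful about is stating the two set-membership claims explicitly and in the right order — in particular, that admissibility of the logged control $u^i$ is what places $\truev$ in $F(x)$, whereas it is Assumption~\ref{assumption:uncertainty}, and not any property of the unknown dynamics $f$, that places $\truev$ in the shifted uncertainty set $v^i \oplus \uncertainset(x;x^i)$. One may optionally note that the bound is informative only when the inner minima are finite, which motivates taking the uncertainty sets to be compact in the constructions of later sections.
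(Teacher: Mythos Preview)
Your proof is correct and follows essentially the same argument as the paper: the paper also records that $\truev \in v^i \oplus \uncertainset(x;x^i)$ and $\truev \in F(x)$, chains the resulting inequalities $\min_{\hatvi \in v^i \oplus \uncertainset(x;x^i)} p^\top \hatvi \le p^\top \truev \le H(x,p)$, and then takes the maximum over $i$. Your write-up is slightly more explicit about why each membership holds and about taking the final maximum, but the route is identical.
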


\cref{fig:diagram} provides a visual explanation of this lower bound mechanism.

\begin{figure}
\centering
\includegraphics[width=\columnwidth]{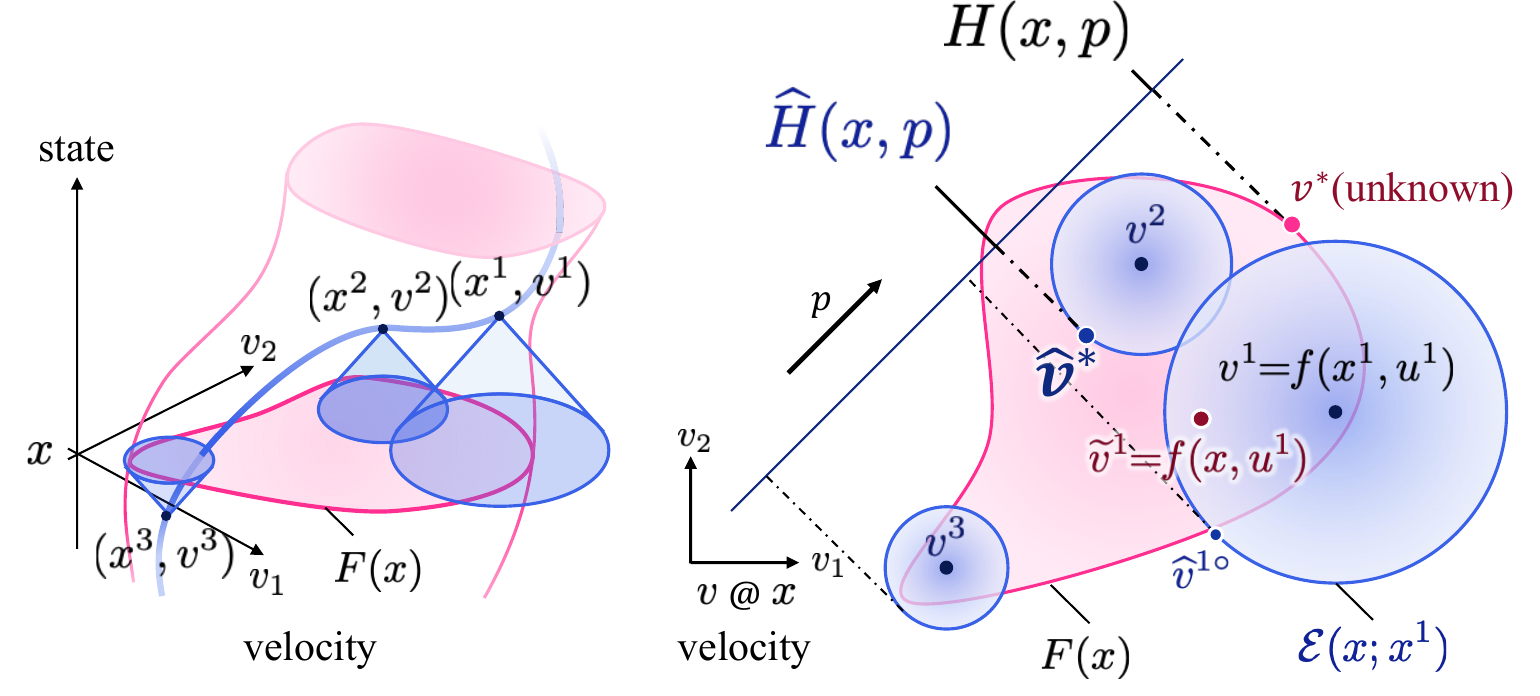}
\vspace{-1em}
\caption{
Illustration of Data-driven Hamiltonian. (left) Velocity space indexed by state. The trajectory data consist of state velocities $v^{i}$ indexed by states $x^{i}$ (\textcolor{blue}{blue}). The true VFB at a query state $x$, $F(x)$, is unknown (\textcolor{magenta}{red}). We can estimate $F(x)$ by mapping data $v^{i}$ at $x^{i}$ to true velocity $\widetilde{v}^{i}$ at $x$. (right) Velocity space at $x$ (top-down view of the left). $\widetilde{v}^{i}$ lies in an uncertainty set $\uncertainset(x;x^{i})$ propagated from $x^{i}$ to $x$ (\textcolor{blue}{blue circle}). Given the costate $p$, the DDH $\widehat{H}(x,p)$ in \eqref{eq:ddh_general} takes the best guess $\widehat{v}^{*}$ among $\widehat{v}^{i\circ}$'s, the worst-case realization of $\widetilde{v}^{i} \in \uncertainset(x;x^{i})$. This procedure ensures $\widehat{H}(x,p) \leq H(x,p)$.
}
% \vspace{-1em}
\label{fig:diagram}
\end{figure}

\subsection{Practical Implementation}
\label{subsec:ddh-implementation}

In this section, we describe several instantiations of the uncertainty set and the resulting DDHs by using various levels of system knowledge described in \cref{subsec:problem}. In each case, the DDH is computed by $i^* = \argmax_{i = \{1, \hdots, N\}} p^\top \widehat{v}^{i\circ}$ where $\widehat{v}^{i\circ} = \argmin_{\hatvi \in v^i \oplus \uncertainset(x; x^i)} p^\top \hatvi$.
% \xhdr{$l_2$-ball DDH}
\subsubsection{$l_2$-ball DDH}
Here we consider the uncertainty sets we can obtain from knowing the uniform Lipschitz constant $\Lx$ or the input-element-wise Lipschitz constant vector $\Lin$. With $\Lx$, we can use $\uncertainset_{\Lx}(x; x^i):=\ball(\Lx \norm{x - x^i})$, and with $\Lin$, we can use
$\uncertainset_{\Lin}(x; x^i):=\ball(\Lin{}^\top \lvert x - x^i \rvert),$
which guarantees \cref{assumption:uncertainty} based on \eqref{eq:lipschitz-uniform} and \eqref{eq:lipschitz-input}, respectively. Since minimizing a linear objective over an $l_2$-ball has a closed-form solution, the $l_2$-ball DDH can be found with
\begin{equation}
\label{eq:opt_ham_sphere}
    \begin{aligned}
        \widehat{v}^{i\circ} = v^i - r^i (x) \frac{p}{\norm{p}}, %\\
        %&i^* = \argmax_{i = \{1, \hdots, N\}} p^\top \widehat{v}^{i},
    \end{aligned}
\end{equation}
where $r^i (x)$ corresponds to the radius of $\uncertainset_{\Lx}(x; x^i)$ or $\uncertainset_{\Lin}(x; x^i)$ respectively. 
\subsubsection{Hyperrectangle DDH}
Here we consider the uncertainty sets we can obtain from knowing the output-element-wise Lipschitz constant vector $\Lout$ or the sensitivity matrix $\Lio$. With $\Lout$, we can use $\uncertainset_{\Lout}(x; x^i):=\rect(\Lout \norm{x - x^i})$,
and with $\Lio$, we can use $\uncertainset_{\Lio}(x; x^i) := \rect(\Lio \lvert x - x^i \rvert)$. Since minimizing a linear objective over a hyperrectangle has a closed-form solution, both element-wise DDHs can be found with
\begin{equation}
\begin{aligned}
\vspace{-1em}
    &\widehat{v}^{i\circ}_{j} = \begin{cases}
                        v_{j}^{i} - r^i_j(x) & \text{if } p_j \ge 0, \\
                        v_{j}^{i} + r^i_j(x) & \text{otherwise},
                        \end{cases} %\\
    % &i^* = \argmax_{i= \{1, \hdots, N\}} p^\top \widehat{v}^{i},
\end{aligned}
\end{equation}
where $r^i(x)$ corresponds to the radii of $\uncertainset_{\Lout}(x; x^i)$ and $\uncertainset_{\Lio}(x; x^i)$ respectively. 

\begin{remark} (Computational cost)
Fast computation of the DDH is essential since it must be conducted at all state-time grid points for the HJ reachability computation. The DDH computation scales linearly with the number of data points, as it involves solving the inner optimization problem independently for each data point and taking the maximum. The cost of the inner optimization for both the $l_2$-ball and hyperrectangle DDH scales linearly with the state dimension, as computing the uncertainty set and evaluating the optimization problem are both linear in the state dimension. Therefore, the DDH at a given state $x$ can be computed in $O(N \xdim)$ operations.
\end{remark}

\subsection{Additional System Knowledge}
\label{subsec:additional}
We propose additional modifications to the DDH in \eqref{eq:ddh_general} for when we have further information about the system in order to reduce the gap between $\ddh$ and the true $H$. For example, physical systems are subject to a reasonable range of state velocities, and we can use this velocity bound to refine our DDH. Suppose we know that a set-valued map $G(x)$ bounds the VFB, satisfying $F(x) \subseteq G(x)$,
for all states $x$ within the computation domain.
Then, 
\[
   \min_{v \in G(x)} p^\top v \le \min_{v \in F(x)} p^\top v \le \max_{v \in F(x)} p^\top v = H(x, p).
\] 
As a result, we can improve the DDH to $\ddh_{G}(x, p)$ where
\[
    \resizebox{.99\hsize}{!}{$\displaystyle
    \ddh(x, p) \!\le\! \ddh_{G}(x, p)\!:=\!\max\!\left\{\min_{v \in G(x)} p^\top v, \ddh(x, p)\right\} \!\le \!H(x, p)
    $}.
\]
If $G(x)$ consists of simple shapes like hyperspheres or hyperrectangles, $\min_{v \in G(x)} p^\top v$ is computationally inexpensive to evaluate. Using $\ddh_{G}$ can be an efficient way to reduce the approximation error of our approach in regions where data is scarce and as a result $\uncertainset$ is large. 

Finally, the DDH can be applied modularly if only a partial component of the system dynamics is uncertain. For instance, if the dynamics consists of two subsystems,
\begin{equation*}
    \begin{bmatrix} \dot{\traj}_1(s) \\ \dot{\traj}_2(s)
    \end{bmatrix} =     \begin{bmatrix} f_1(\traj_1(s), \traj_2(s), u_1(s)) \\  f_2(\traj_1(s), \traj_2(s), u_2(s))
    \end{bmatrix},
\end{equation*}
where $f_1$ is unknown and $f_2$ is known, the Hamiltonian can be decomposed into $H(x, p) = \max_{u_1} p_1^\top f_1(x_1, x_2, u_1) + \max_{u_2} p_2^\top f_2(x_1, x_2, u_2)$. We can replace only the first term with the DDH and keep the second term, which can be computed based on the known model of $f_2$.

\subsection{Data-driven Value Functions \& Safe Sets}

Define the \textit{data-driven value functions}, $\widehat{V}(x, t)$, as the solution of the HJ-VIs in \eqref{eq:brt-vi} and \eqref{eq:brat-vi}, where $H$ is replaced with the DDH $\ddh$ in \eqref{eq:ddh_general}. For instance, $\widehat{V}$ for the $\brt$ BRT is defined as the solution to

\HRule
DDH-based HJ-VI for $\brt$ BRT:
\begin{equation}
\label{eq:hj_vi_data}
    0\!=\!\min \left\{\unsafefcn(x)\!- \!\ddV(x, t),\; -D_t \ddV(x, t)\!+\!\ddh(x, D_x \ddV(x, t)) \right\},\;\;\;\;\;
\vspace{-0.5em}
\end{equation}
with terminal condition $\ddV(x, 0) = \unsafefcn(x)$.
\vspace{0.5em}
\hrule
\vspace{0.75em}

\noindent We can similarly define the data-driven value function for the $\reachavoidbrt$ BRT. Presented next is the main theoretical result of this paper.

\begin{figure*}[t!]
\centering
\includegraphics[width=\textwidth]{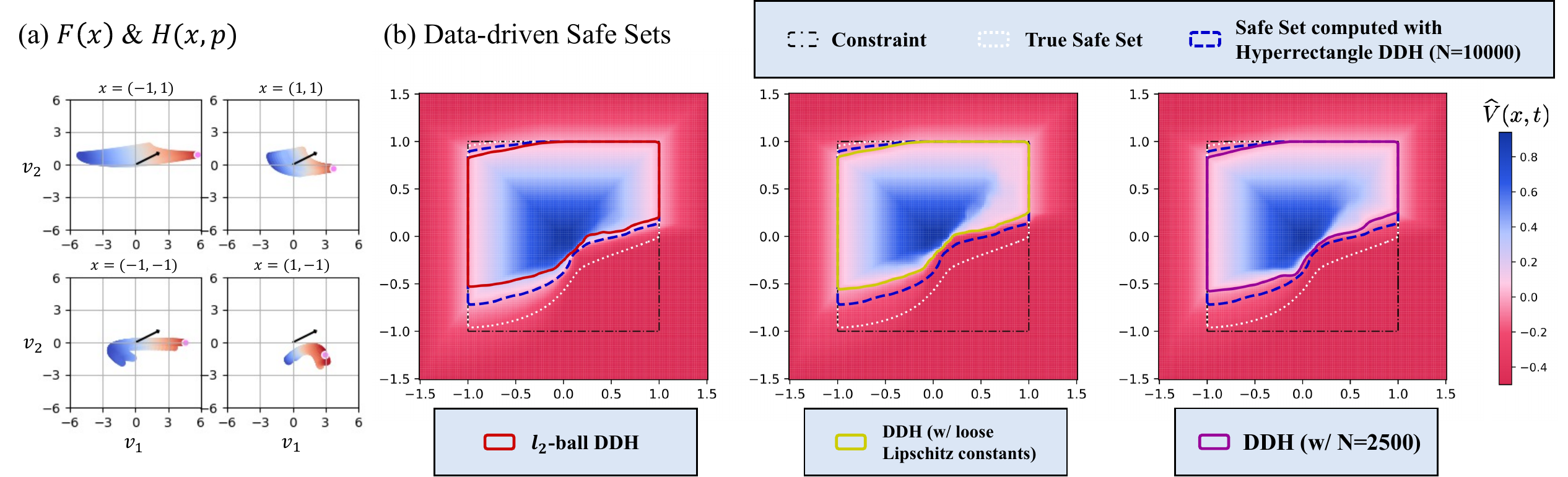}
\vspace{-1.25em}
\caption{Random Polynomial Systems. (a) Example of the vector field bound $F(x)$ at various states and $\argmax_{v \in F(x)} p^\top v$, illustrating the nonconvexity of the optimization. (b) Data-driven safe sets ($\brt$ BRT) computed using the DDH: (1) hyperrectangle DDH using tight sensitivity matrix $\Lio$ (\textcolor{blue}{blue}), (2) $l_2$-ball DDH based on Lipschitz constant $\Lx$ (\textcolor{BrickRed}{red}), (3) hyperrectangle DDH with doubled matrix $\Lio$ (\textcolor{yellowcustom}{yellow}), and (4) hyperrectangle DDH with fewer data points (\textcolor{Plum}{purple}).}
\label{fig:running-example}
\end{figure*}

\begin{theorem}
\label{thm:main}
The data-driven value function $\datadrivenV(x, t)$ is a \textit{guaranteed lower bound} of the true value function of the BRT problems in \eqref{eq:brt-value} and \eqref{eq:brat-value}:
\begin{equation}
\label{eq:value_comparison}
    \datadrivenV(x, t) \le V(x, t).
\end{equation}
\end{theorem}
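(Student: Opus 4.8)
The plan is to establish \eqref{eq:value_comparison} as a monotonicity statement for the Hamilton--Jacobi variational inequalities (HJ-VIs): since \cref{thm:ddh lower bound} gives the pointwise bound $\ddh(x,p)\le H(x,p)$, and the HJ-VIs \eqref{eq:brt-vi}/\eqref{eq:hj_vi_data} (and their $\reachavoidbrt$ counterparts \eqref{eq:brat-vi}) depend on the Hamiltonian monotonically with every other ingredient held fixed, the solution $\datadrivenV$ built from the smaller Hamiltonian $\ddh$ must lie below the solution $V$ built from $H$. I would run the argument for the $\brt$ BRT; the $\reachavoidbrt$ case is identical once the extra, Hamiltonian-free and monotone $\max\{\target(x)-\,\cdot\,,\ \cdot\,\}$ layer is carried through unchanged.

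First I would assemble the regularity facts that license viscosity-solution theory for both equations. The true $H(x,\cdot)$ is convex and Lipschitz; for the $l_2$-ball and hyperrectangle instantiations of \cref{subsec:ddh-implementation}, the DDH $\ddh$ is likewise Lipschitz in $p$ and satisfies a structural estimate of the form $|\ddh(x,p)-\ddh(x',p)|\le C\norm{x-x'}\,(1+\norm{p})$: each inner value $\min_{\hatvi\in v^i\oplus\uncertainset(x;x^i)}p^\top\hatvi$ is, up to sign, the support function of a compact convex set whose Hausdorff deviation as $x$ varies is controlled by the Lipschitz constant defining $\uncertainset$, and a finite maximum over $i$ preserves both properties (the apparent singularity of $p/\norm{p}$ in \eqref{eq:opt_ham_sphere} at $p=0$ is removable, since $\ddh(x,0)=0=H(x,0)$). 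Under these hypotheses \eqref{eq:brt-vi} and \eqref{eq:hj_vi_data} each admit a unique bounded, uniformly continuous viscosity solution and --- the tool the proof leans on --- obey a comparison principle: any viscosity subsolution lies below any viscosity supersolution with the same terminal datum \cite{bardi1997optimal,fisac2015reach}. By definition $V$ and $\datadrivenV$ are exactly these solutions, and they share the terminal condition $V(\cdot,0)=\datadrivenV(\cdot,0)=\unsafefcn$.

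The substantive step is to show that $\datadrivenV$ is a viscosity \emph{subsolution} of the HJ-VI \eqref{eq:brt-vi} carrying the \emph{true} Hamiltonian $H$. Rewriting \eqref{eq:brt-vi} in the standard proper form $\max\{W-\unsafefcn(x),\ D_tW-H(x,D_xW)\}=0$, one observes that this operator is nonincreasing in the Hamiltonian argument (enlarging $H$ only decreases the branch $D_tW-H$, leaving the obstacle branch $W-\unsafefcn$ untouched). Hence, at any point where a smooth $\phi$ touches $\datadrivenV$ from above, the subsolution inequality for \eqref{eq:hj_vi_data}, namely $\max\{\phi-\unsafefcn,\ D_t\phi-\ddh(x,D_x\phi)\}\le 0$ at the contact point, implies a fortiori, via \cref{thm:ddh lower bound}, that $\max\{\phi-\unsafefcn,\ D_t\phi-H(x,D_x\phi)\}\le 0$. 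Thus $\datadrivenV$ is a viscosity subsolution of the true-$H$ VI, $V$ is its unique solution and therefore a supersolution, the terminal data agree, and the comparison principle yields $\datadrivenV(x,t)\le V(x,t)$ everywhere --- which is \eqref{eq:value_comparison}. The same monotonicity holds verbatim for the nested proper form $\max\{W-\unsafefcn,\ \min\{W-\target,\ D_tW-H(x,D_xW)\}\}=0$ of the $\reachavoidbrt$ VI. The only points that demand genuine care rather than bookkeeping are (a) certifying, for each uncertainty-set instantiation, that $\ddh$ meets the continuity hypotheses under which the comparison principle for these degenerate obstacle-type HJ-VIs actually holds, including on the unbounded state domain, and (b) tracking the viscosity sub/supersolution conventions consistently through the nested $\min$/$\max$ of the $\reachavoidbrt$ variational inequality; I expect (a) to be the main obstacle.
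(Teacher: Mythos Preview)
Your argument is correct, but it proceeds along a genuinely different line from the paper's. You run a pure PDE comparison: from \cref{thm:ddh lower bound} you note that the proper operator $\max\{W-\unsafefcn,\,D_tW-H(x,D_xW)\}$ is monotone nonincreasing in $H$, so $\datadrivenV$, a solution of the $\ddh$-equation, is automatically a viscosity subsolution of the true-$H$ equation; the comparison principle then does the work. The paper instead takes an \emph{inverse-optimality} route: it introduces a fictitious Stackelberg game in which a leader picks an observed velocity $v^i$ and an adversarial follower picks a perturbation in $\uncertainset(x;x^i)$, proves that this game's value is exactly the viscosity solution of the DDH HJ-VI, and then establishes $\datadrivenV\le V$ at the trajectory level by exhibiting a specific feasible follower strategy (namely, the one that steers the fictitious dynamics back onto the true flow $f(\cdot,u^i)$). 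What the paper's approach buys is a control-theoretic interpretation of $\datadrivenV$---in particular it explains why the policy $\pi_{\hatV}$ in \eqref{eq:safety-policy-k} is safe---and it obtains well-posedness of the DDH equation from differential-game theory rather than by verifying abstract structure hypotheses on $\ddh$. What your approach buys is economy: no auxiliary game, no intermediate value function $\tildeV$, and a proof that transparently isolates the single inequality $\ddh\le H$ as the reason for \eqref{eq:value_comparison}. The point you flag as the main obstacle---checking that $\ddh$ meets the hypotheses under which the obstacle-type comparison principle holds on an unbounded domain---is real, and is precisely what the paper circumvents by appealing to the game formulation.
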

\begin{proof}
The main idea of the proof is to reveal the \textit{inverse optimality of the DDH}. We define a fictitious dynamics that captures a differential game between the leader that selects its action among the observed velocities in $\dataset$, and an adversarial follower that selects its action as the worst-case realization of the uncertain dynamics. Then, we prove that the value function of this game is the viscosity solution of the DDH-HJ-VI \eqref{eq:hj_vi_data}, $\datadrivenV$. Finally, we show that the true dynamics is always the outcome of a less adversarial follower strategy. See Appendix for the full proof.
\end{proof}

\begin{theorem} We define the safe set $\safeset(t):=\{x\; | \; V(x, t) \ge 0\}$ and the data-driven safe set $\widehat{\safeset}(t):=\{x\; | \; \widehat{V}(x, t) \ge 0\}$. The safe set $\safeset(t)$ can be either $\brt(t; \unsafeset)$ or $\reachavoidbrt(t; \targetset, \unsafeset)$. Then, $\widehat{\safeset}(t)$ is a guaranteed inner-approximation of $\safeset(t)$:
\begin{equation}
    \widehat{\safeset}(t) \subseteq \safeset(t).
\end{equation}
\begin{proof}
    This is a direct result from \eqref{eq:value_comparison}.
\end{proof}
\end{theorem}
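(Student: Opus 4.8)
The plan is to obtain this containment as an immediate corollary of \cref{thm:main}. Both safe sets under consideration, $\brt(t;\unsafeset)$ and $\reachavoidbrt(t;\targetset,\unsafeset)$, are characterized in \eqref{eq:brt-value} and \eqref{eq:brat-value} as the zero-superlevel set of the true value function $V(\cdot,t)$; correspondingly, $\widehat{\safeset}(t)$ is by definition the zero-superlevel set of the data-driven value function $\widehat{V}(\cdot,t)$, which solves the DDH-based HJ-VI. Since \cref{thm:main} asserts the pointwise inequality $\widehat{V}(x,t)\le V(x,t)$ for both BRT problems, the set inclusion should follow by a one-line monotonicity argument on superlevel sets.

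Concretely, I would proceed as follows. First, fix an arbitrary $x\in\widehat{\safeset}(t)$, so that $\widehat{V}(x,t)\ge 0$ by the definition of $\widehat{\safeset}(t)$. Second, invoke \eqref{eq:value_comparison} from \cref{thm:main} to conclude $V(x,t)\ge\widehat{V}(x,t)\ge 0$. Third, by the definition $\safeset(t)=\{x\mid V(x,t)\ge 0\}$, this gives $x\in\safeset(t)$. As $x$ was arbitrary, $\widehat{\safeset}(t)\subseteq\safeset(t)$. The argument is identical whether $\safeset(t)$ denotes the $\brt$ BRT or the $\reachavoidbrt$ BRT, since \cref{thm:main} covers both and in each case the equivalence ``$x$ lies in the BRT $\iff V(x,t)\ge 0$'' is exactly \eqref{eq:brt-value} or \eqref{eq:brat-value}.

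There is no substantive obstacle at this stage: all the technical content---the inverse-optimality / differential-game construction that establishes $\widehat{V}\le V$---is already packaged inside \cref{thm:main}, and what remains is only the elementary fact that passing to zero-superlevel sets is order-preserving. The single point worth articulating carefully is that the characterization of the BRT as a superlevel set of $V$ must be in force for whichever of the two notions is meant, which is precisely why the theorem statement recalls both \eqref{eq:brt-value} and \eqref{eq:brat-value}; granting that, the corollary is immediate.
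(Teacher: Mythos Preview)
Your proposal is correct and matches the paper's own proof, which simply states that the inclusion is a direct result of \eqref{eq:value_comparison}. You have merely spelled out the elementary superlevel-set monotonicity argument that the paper leaves implicit.
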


Finally, the data-driven value function can also provide a safe control policy within the computed safe set $\widehat{\safeset}(t)$. Define
\begin{equation}
\pi_{\hatV}(x, -t) = u^{i*},
\label{eq:safety-policy-k}
\end{equation}
where 
\[
i^* = \argmax_{i \in \{1, \cdots, N\}} \min_{\hatvi \in v^i \oplus \uncertainset(x; x^i)} D_x \hatV(x, t)^\top \hatvi
\]
is the solution of the optimization in \eqref{eq:ddh_general}. When we try to maximize $\hatV$, if we take $u^{i*}$ as our control, we are guaranteed to do better than the worst-case instantiation of the uncertainty. That is,
\begin{equation*}
    \widehat{H}(x, D_x \hatV(x, t)) \le  D_x 
 \hatV(x, t)^\top f(x, u^{i*}).
\end{equation*}
This means that $\pi_{\hatV}(\traj(s), s)$ can maintain safety, $\traj(s) \notin \unsafeset$ for $s \in [-t, 0]$. For the reach-avoid problem, we are guaranteed to reach the target set no later than $t$ under $\pi_{\hatV}$ for any initial state in $\reachavoidbrt(t; \targetset, \unsafeset)$.

\subsection{Running Example: Random Polynomial Systems}
We consider a system with $\xdim =\udim =2$ with dynamics
$f_i(x) = p_i(u_1, u_2) + q_i(u_1, u_2) x_1 + r_i(u_1, u_2) x_2$, for $i=1, 2$, where $p_i, q_i, r_i$ are quadratic polynomials in $u$ with randomly generated coefficients in $[-2, 2]$. The constant terms of $p_i$ are set to zero so that $x\!=\!0$ is an equilibrium under $u\!=\!0$. The randomness in the coefficients yields various nonconvex shapes for the VFB $F(x)$ at each state. Using the first-order optimality condition, we analytically solve for the true Hamiltonian (Fig. \ref{fig:running-example} (a)), from which we compute a true safe set to validate our DDH-based safe sets. Tight Lipschitz constants can also be computed analytically from the coefficients. In applying our methods, we assume no explicit knowledge of the dynamics, treating them as a black-box model. The dataset $\dataset$ is obtained by uniform sampling across the state domain and control bounds. In practice, collecting such data safely is challenging; we address realistic safe data collection further in the next section.

Fig. \ref{fig:running-example} (b) shows safe sets computed using the $\brt$ BRT formulation with four DDH instantiations. In all cases, the resulting sets correctly under-approximate the true safe set. Tighter Lipschitz bounds and larger datasets generally produce less conservative safe sets. However, not only the amount but also the informational content of the data significantly influences the safe set construction. Careful balancing between required prior information (e.g., Lipschitz bounds), data quantity and quality, and the resulting conservatism remains an important direction for future investigation.

\section{Iterative Safe Set Expansion}
\label{sec:experiment_design}

In this section, we introduce an algorithm for iteratively updating a data-driven $\reachavoidbrt$ BRT while maintaining safety throughout the data collection process. 
Our algorithm is motivated by the test procedure of real-world systems for safety verification---such as flight envelope expansion, which will be illustrated in \cref{sec:tiltrotor flight envelope}---where experiments start from an initial conservative safe region, designed with, for instance, local linear analysis. As such, we make the following assumption:

\begin{assumption}
\label{assumption:exp design initial region}
    The target set $\targetset$ must be \textit{forward invariant} under a \textit{backup policy} $\pibackup(x)$ and must not intersect with the unsafe set, $\targetset \cap \unsafeset = \emptyset$.
\end{assumption}

The target set is by definition contained in the safe set (the $\reachavoidbrt$ BRT) even when no data is collected.  Thus, we can safely initiate the data collection by setting the initial estimate of the safe set $\widehat{\safeset}_0$ as $\targetset$ and using $\pibackup$ for the initial experiments. 

\begin{algorithm}[t!]
\caption{Safe experiments for safe set expansion}
\label{alg:experiment-design}
\vspace{0.5em}
\KwIn{$\targetset=\{l(x)\ge0\}$: Control invariant target set, \\ 
$\unsafeset=\{g(x)\le0\}$: Unsafe set, \\
$\pibackup$: Backup policy, $\;\;\piexploration$: Exploration policy, \\
$\niter$: Number of iterations, \\
$\ntraj$: Number of experiments per iteration, \\
$\text{T}$: Time length of each rollout,  $\Delta t$: sampling time,\\
$\mathsf{GetInit}(\safeset; \ntraj, \dataset, V)$: Selects initial states within the safe set $\safeset$, given the dataset $\dataset$ and value function $V$, \\
$\mathsf{Rollout}(\pi; x_0)$: Obtains trajectory under policy $\pi$ by running experiment with initial state $x_0$,\\
$\mathsf{PruneData}(\dataset, V)$: Conducts data reduction  based on the value function $V$.
% $\Delta t$: sampling time
}
\KwOut{Final safe set $\widehat{\safeset}_{\niter}$}
\SetAlgoNoLine
% \vspace{-0.5em}
% \noindent\rule{0.95\columnwidth}{0.25pt}\vspace{-1.0em} \;
% \hspace*{-0.5em} 
Initialization: $\dataset_{0} = \{\}$, $\widehat{\safeset}_0\gets\targetset$, $\pisafety_{0}\gets\pibackup$ \;
\For{$k \gets 0$ \KwTo $\niter$}{
    $\{ x_0^{j} \}_{j=1}^{\ntraj}\!\gets\! \mathsf{GetInit}(\widehat{\safeset}_k; \ntraj, \dataset_{k}, \hatV_k)$ \;
     $\dataset_{k+1}\gets \dataset_{k}$ \;
    \For{$j \gets 1$ \KwTo $\ntraj$}{
        $\pi_{k}(x) \gets$ \eqref{eq:safe_exploration_policy} based on $\pibackup, \pisafety_{k}, \piexploration$.\;
        $\{x^i, u^i, v^i\}_{i=0}^{\lfloor \text{T} / \Delta t \rfloor} \gets \mathsf{Rollout}(\pi_{k}; x_0^j)$ \;
        $\dataset_{k+1} \gets \dataset_{k+1} \cup \{x^i, u^i, v^i\}_{i=0}^{\lfloor \text{T} / \Delta t \rfloor}$ \;
    }
    $\hatV_{k+1} \gets$ Compute \eqref{eq:hj_vi_data} with $\dataset_{k+1}$ \;
    $\widehat{\safeset}_{k+1}, \pisafety_{k+1} \gets$ Update from $\hatV_{k+1}, \dataset_{k+1}$ \;
    $\dataset_{k+1}\gets \mathsf{PruneData}(\dataset_{k+1}, \hatV_{k+1})$
}
\end{algorithm}
% \vspace{-2em}

\textit{Overview} (Algorithm \ref{alg:experiment-design}). At each iteration $k$, we start the procedure by sampling initial conditions $\{x_0^j\}_{j=1}^{\ntraj}$ from the current data-driven safe set $\widehat{\safeset}_{k}$. We then roll out trajectories from each initial condition using a policy that is guaranteed to maintain the trajectory within $\widehat{\safeset}_k$, denoted by $\pisafeexploration_k$. The data from these rollouts are appended to the current dataset, yielding a new dataset $\dataset_{k+1}$ which is then used to compute the next iteration's data-driven safe set $\widehat{\safeset}_{k+1}$. A data reduction step is applied at the end of each iteration to reduce the computational expense of the process and retain only the data most relevant to safety. 

\textit{Initial States.}
The initial states are sampled near the boundary of the data-driven safe set $\widehat{\safeset}_{k}$.
Sampling initial states close to the boundary reduces the conservatism of the DDH at the boundary, thus helping to expand the safe set. The existing data in $\dataset$ can also guide the selection of the initial states, for example by encouraging the selection of states in regions with less data.

\textit{Safe Exploration Policy.} The safe exploration policy $\pisafeexploration_{k}$ is defined as
\begin{equation}
\pi_k(x) = \begin{cases}
            \pibackup(x)         &\text{if } \target(x) \ge 0, \\
            \pisafety_k(x)         &\text{if } \target(x) < 0 \;\&\; \hatV_{k}(x, t) < \safetythreshold,   \\ 
            \piexploration(x) &\text{otherwise,}    
\end{cases}
\label{eq:safe_exploration_policy}
\end{equation}
with safety threshold $\safetythreshold > 0$, where $t$ is the time horizon of the $\reachavoidbrt$ BRT. If the system is in the target set, we apply the backup controller, which is guaranteed to keep it within the set by \cref{assumption:exp design initial region}. If the state is outside the target set and away from the safe set boundary ($\hatV_{k}(x, t) \ge \safetythreshold$), we apply an exploration policy $\piexploration$ which does not need to satisfy any particular safety constraint. Finally, when the state is close to exiting the safe set ($\hatV_{k}(x, t) < \safetythreshold$), we apply the safety controller of the current data-driven safe set, $\pisafety_k(x)=\pi_{\hatV_{k}}(x, -t)$, where $\pi_{\hatV_{k}}$ is defined in \eqref{eq:safety-policy-k}. Regardless of the exploration policy $\piexploration$, the system under this switching safety filter can stay within $\widehat{\safeset}_k$ due to its control invariance noted in \cref{remark:infinite horizon} \cite{wabersich}.

\begin{proposition}
\label{thm:safe exploration}
If $\widehat{\safeset}_0 = \targetset$ and $\pibackup$ satisfy \cref{assumption:exp design initial region}, the trajectories in Algorithm \ref{alg:experiment-design} never enter $\unsafeset$.
\end{proposition}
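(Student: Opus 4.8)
The proposition reduces to establishing, at every iteration $k$ of \cref{alg:experiment-design}, two claims: (i) $\widehat{\safeset}_k\cap\unsafeset=\emptyset$, and (ii) every trajectory returned by $\mathsf{Rollout}(\pi_k;x_0^j)$ with $x_0^j\in\widehat{\safeset}_k$ remains in $\widehat{\safeset}_k$. Since $\mathsf{GetInit}$ only ever returns points of $\widehat{\safeset}_k$, (i) and (ii) together imply that no rollout trajectory ever meets $\unsafeset$, which is the assertion. No induction across $k$ is needed: by \cref{thm:ddh lower bound} the DDH is a lower bound of $H$ for \emph{any} dataset (\cref{assumption:uncertainty} holds for each concrete uncertainty set of \cref{subsec:ddh-implementation}), so \cref{thm:main}, and hence $\widehat{\safeset}_k\subseteq\reachavoidbrt(t;\targetset,\unsafeset)$, applies at every iteration regardless of how $\dataset_k$ was accumulated or pruned; the internals of $\mathsf{GetInit}$ and $\mathsf{PruneData}$ are irrelevant to safety.

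For (i): when $k=0$, $\widehat{\safeset}_0=\targetset$ and $\targetset\cap\unsafeset=\emptyset$ by \cref{assumption:exp design initial region}; for $k\ge1$, the inclusion above combines with the observation that every $x\in\reachavoidbrt(t;\targetset,\unsafeset)$ has $\traj(-t)=x\notin\unsafeset$ — directly from the BRT definition, since the initial time $\tau=-t$ lies in $[-t,s]$ for every admissible reaching time $s$ — so $\reachavoidbrt(t;\targetset,\unsafeset)\subseteq\unsafeset^c$, giving (i). I would also establish the companion fact $\targetset\subseteq\widehat{\safeset}_k$ for all $k$ (well-posedness of the backup branch and nonemptiness of $\widehat{\safeset}_k$): any $x\in\targetset$ certifies its membership in the data-driven reach--avoid BRT by ``reaching'' $\targetset$ at the initial instant, the avoid constraint there being vacuous because $\targetset\cap\unsafeset=\emptyset$.

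For (ii): case-split on the three branches of $\pi_k$ in \eqref{eq:safe_exploration_policy}. While $\target(x)\ge0$ the trajectory is in $\targetset$ and $\pi_k\equiv\pibackup$, so by forward invariance of $\targetset$ under $\pibackup$ (\cref{assumption:exp design initial region}) it stays in $\targetset\subseteq\widehat{\safeset}_k$. While $\target(x)<0$ and $\hatV_k(x,t)<\safetythreshold$, we have $\pi_k\equiv\pisafety_k=\pi_{\hatV_k}(\cdot,-t)$ from \eqref{eq:safety-policy-k}, and the inverse-optimality property of the DDH safety policy — $\widehat{H}(x,D_x\hatV_k(x,t))\le D_x\hatV_k(x,t)^\top f(x,u^{i*})$, so that under $u^{i*}$ the true evolution of $\hatV_k$ is no worse than the DDH-HJ-VI \eqref{eq:hj_vi_data} predicts — keeps the trajectory in $\{\hatV_k(\cdot,t)\ge0\}=\widehat{\safeset}_k$. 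In the remaining branch, $\target(x)<0$ and $\hatV_k(x,t)\ge\safetythreshold$, the unconstrained exploration policy $\piexploration$ acts; here the margin $\safetythreshold>0$ forces $\hatV_k$ to decrease to $\safetythreshold$ — handing control to the safety branch — before the trajectory can approach $\partial\widehat{\safeset}_k=\{\hatV_k(\cdot,t)=0\}$, so the switched policy renders $\widehat{\safeset}_k$ forward invariant. This is exactly the control invariance of a reach--avoid BRT with control-invariant target noted in \cref{remark:infinite horizon}, together with the standard switching-safety-filter argument \cite{wabersich}.

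The main obstacle is making this third branch rigorous under the discrete-time rollout: $\mathsf{Rollout}$ samples $\pi_k$ only every $\Delta t$, so $\piexploration$ is held over a sampling interval, and one must quantify $\safetythreshold$ in terms of $\Delta t$, a Lipschitz constant of $\hatV_k$ along trajectories, and a uniform bound on $\norm{v}$ over the computation domain (e.g. the $G(x)$ of \cref{subsec:additional}), so that $\hatV_k$ cannot cross from $\{\hatV_k(\cdot,t)\ge\safetythreshold\}$ through $\partial\widehat{\safeset}_k$ into $\unsafeset$ within one interval before the safety controller re-engages. In the continuous-time idealization with instantaneous switching the argument collapses to a clean viability statement — $\widehat{\safeset}_k$ control invariant, $\pisafety_k$ enforcing invariance on a neighborhood of its boundary, $\piexploration$ active only in the interior — so I would first prove it in that regime and then supply the sampling margin.
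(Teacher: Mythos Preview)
Your argument is correct and follows the same skeleton as the paper's (very terse) proof: $x_0^j\in\widehat{\safeset}_k$, $\pi_k$ renders $\widehat{\safeset}_k$ forward invariant (the paper asserts this directly, citing the preceding switching-filter discussion and \cite{wabersich}, whereas you unpack it branch-by-branch), and $\widehat{\safeset}_k$ being a $\reachavoidbrt$ BRT gives $\widehat{\safeset}_k\cap\unsafeset=\emptyset$. Your discrete-time sampling concern is a legitimate refinement that the paper also leaves unaddressed.
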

\begin{proof}
Consider an arbitrary iteration $k$, with current data-driven safe set $\widehat{\safeset}_{k}$ from dataset $\dataset_{k}$. The initial state is contained in $\widehat{\safeset}_{k}$. As a result, since $\pisafeexploration_{k}$ renders $\widehat{\safeset}_k$ forward invariant, the resulting trajectory will be contained in $\widehat{\safeset}_k$. Since $\widehat{\safeset}_k$ is a $\reachavoidbrt$ BRT, $\widehat{\safeset}_k \cap \unsafeset = \emptyset$.   
\end{proof}

\textit{Data Reduction.}
We apply a data reduction that removes data points irrelevant to safety. 
This consists of keeping only the data whose indices show up as the optimal $i^*$ for the DDH evaluated during the computation of the value function $\hatV_{k}$. These data points are the ones actually used in the computation of $\hatV_{k}$ and the resulting data-driven safe control.
This reduction significantly helps the computation time with negligible impact on the resulting safe set.

\textit{Design Decisions.} A variety of design choices influence the outcome of the algorithm. In particular, the sampling strategy for initial states and the exploration policy $\piexploration$ have been empirically shown to be decisive factors in the efficiency of the safe set expansion. Additional parameters, including $\ntraj$ and the experiment length $\text{T}$, also affect the result. A more in-depth investigation into the varying effects of these design choices is left for future work.

\section{Application to Tiltrotor Safety Verification}
\label{sec:tiltrotor flight envelope}

\begin{figure}
\centering
\includegraphics[width=\columnwidth]{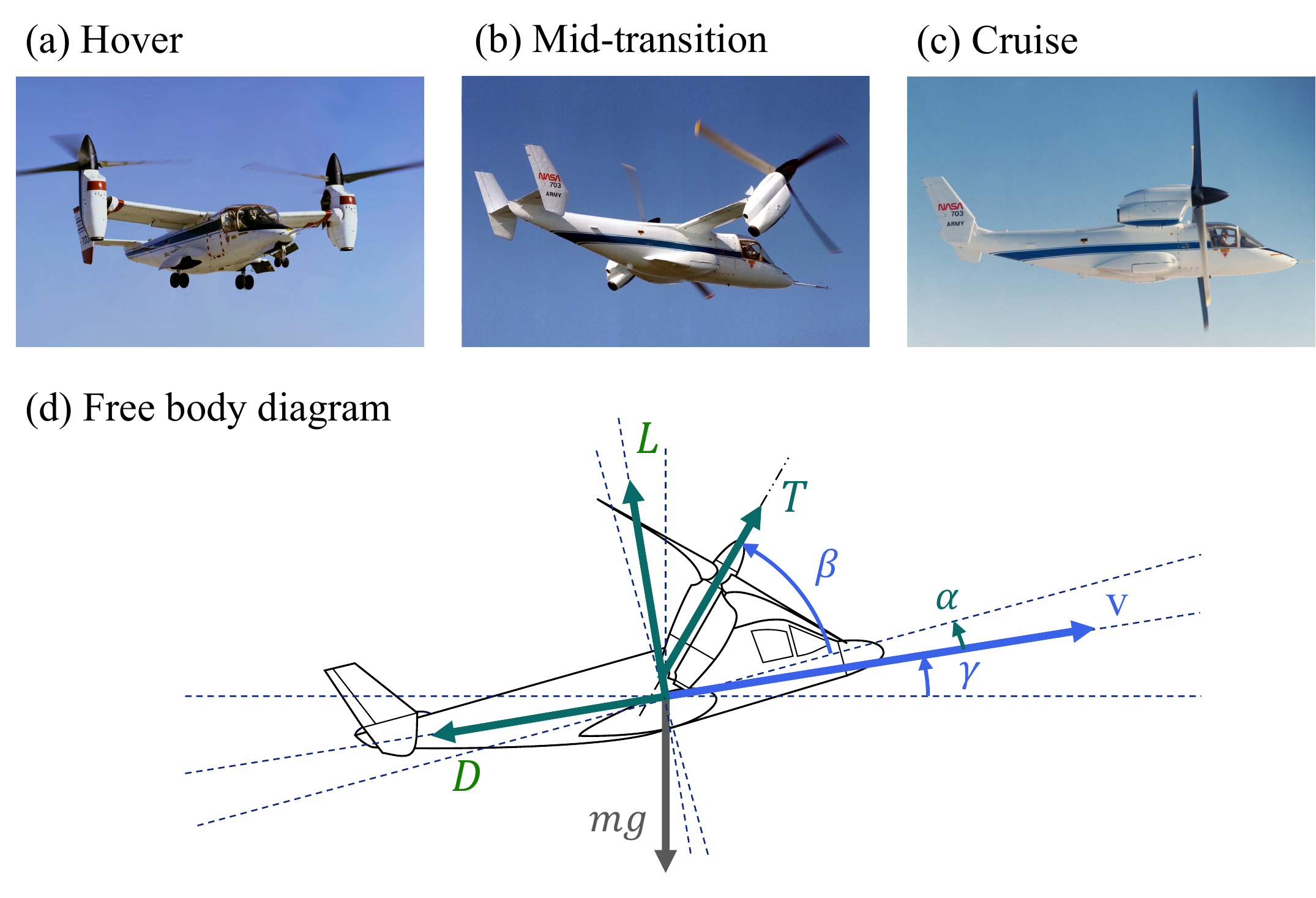}
\caption{(a)-(c) NASA-Army-Bell XV-15 tiltrotor aircraft in various rotor configurations (Source: NASA). (d) Free body diagram of XV-15 for its longitudinal dynamics in \eqref{eq:xv15-eom}.}
\label{fig:XV15}
\end{figure}

\begin{figure*}[t!]
\centering
\includegraphics[width=\textwidth]{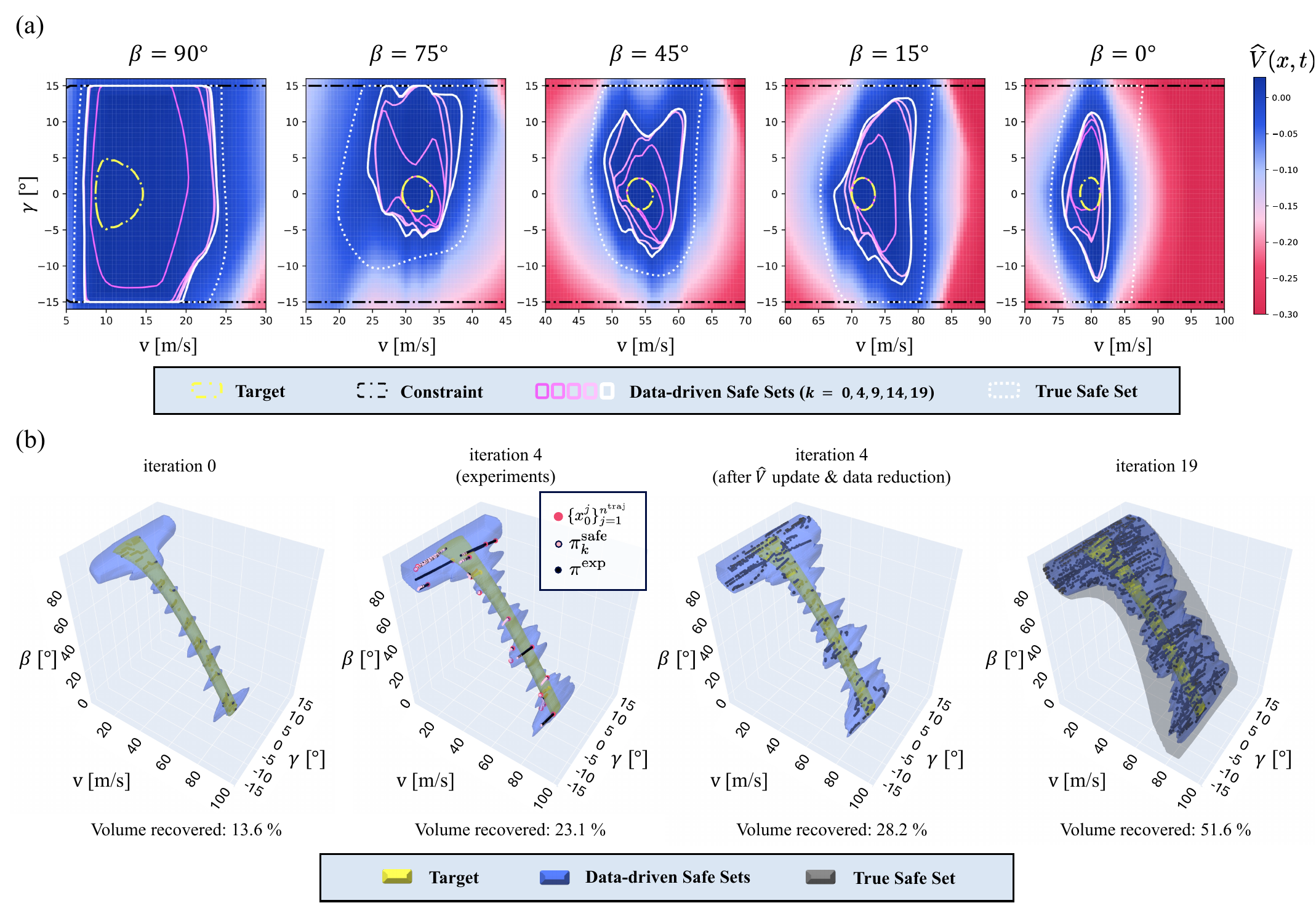}
\vspace{-1em}
\caption{XV-15 safe set expansion conducted for 20 iterations under Algorithm \ref{alg:experiment-design}, during its flight mode transition from near-hover ($\beta=90^\circ$) to cruise ($\beta=0^\circ$). (a) Data-driven safe sets at various iterations, shown as 2D slices in $\airspeed$-$\gamma$ at various tilt angles $\beta$. (b) The safe sets shown in 3D state space at iterations $k=0,4,19$. At each iteration, the trajectories are collected under the data-driven safety-filtered exploration policy, ensuring that they stay within the previous iteration's safe set. After the safe set is updated based on new data, data reduction is conducted to prune data irrelevant for safe set computation.}
\label{fig:xv15-result}
\end{figure*}

Recent progress in electric propulsion technologies has enabled various novel vertical take-off and landing (VTOL) aircraft designs to emerge. However, verifying the flight envelope of these vehicles is challenging due to their flight mode transitions between hover and cruise. During these transitions, the vehicle faces a precarious balance between rotor propulsion and aerodynamic lift, alongside highly uncertain aerodynamic interactions between the rotors and the airframe, significantly increasing the risk of loss of control \cite{belcastro2017aircraft}. This vulnerability becomes particularly apparent during flight tests in product development, where experiments to expand the vehicle's flight envelope must be conducted without an accurate model available a priori. This motivates applying our framework in \cref{sec:experiment_design} to VTOL vehicle flight envelope verification while treating the underlying dynamics as uncertain. Previously, HJ reachability was adopted in \cite{bayen2007aircraft, lombaerts2013safe, hsu2024towards} to verify the flight envelope of aircraft using explicit dynamics models. 

We consider the XV-15 \cite{maisel2000history}, a tiltrotor vehicle (Fig. \ref{fig:XV15}), transitioning from vertical to forward flight. Its mathematical model in \cite{ferguson1988mathematical} is used for simulation and validating our results. We construct the $\reachavoidbrt$ BRT from which safe recovery to a near-hover trim condition is guaranteed. Our algorithm gradually expands the BRT while ensuring safety during experiments for new data collection, analogous to the procedure of flight envelope expansion in flight tests.

\textit{Implementation Details.} We consider the reduced-order longitudinal dynamics of the vehicle detailed in \cite{ferguson1988mathematical, lombaerts2022trim}. The state consists of $x\!=\![\airspeed, \gamma, \beta]^\top$, where $\airspeed$ is the airspeed, $\gamma$ is the flight path angle, and $\beta$ is the rotor tilt angle ($\beta\!=\!0^\circ$ in cruise configuration), and the control input is $u\!=\![T, \alpha, \delta]$, where $T$ is the rotor thrust, $\alpha$ is the angle of attack, and $\delta$ is the rotor tilt angle rate. The equation of motion is given as
\begin{equation}
\resizebox{0.89\hsize}{!}{$\displaystyle
\begin{aligned}
{\left[\begin{array}{c}\!
\dot{\airspeed}\!\! \\
\!\dot{\gamma}
\!\!\end{array}\right] }\!=\!-g \!\left[\begin{array}{l}
\text{\footnotesize $\sin \gamma$}\!\\
\!\frac{\cos \gamma}{\airspeed}\!\!
\end{array}\right] \!+\!\frac{1}{m}\!\left(\left[\begin{array}{c}
\!\text{\footnotesize $\cos \left(\alpha + \beta\right)$}\!\!\\
\!\frac{\sin \left(\alpha + \beta\right)}{\airspeed}
\!\!\end{array}\right]\!T\!+\!\left[\begin{array}{c}
\!\!\text{\footnotesize $-D(\airspeed, \alpha, \beta)$}\!\!\! \\
\!\!\frac{L(\airspeed, \alpha, \beta)}{\airspeed}\!\!\!
\end{array}\right]\right),
\label{eq:xv15-eom}
\end{aligned}
$}
\end{equation}
where $L$ and $D$ are the lift and drag forces that vary with respect to $\airspeed, \alpha, \beta$, which constitutes the main nonlinearity and uncertainty source in the dynamics. For the true safe set computation, since the true Hamiltonian involves a nonconvex optimization, we approximately solve it by evaluating it over a set of discretized control inputs and then taking the maximum. For the DDH, we treat {\small $\dot{\beta}\!=\!\delta$} as a known portion of the dynamics and apply the technique in \cref{subsec:additional}. We use the hyperrectangle DDH based on a sensitivity matrix whose values are estimated from the flight data in \cite{ferguson1988mathematical}.

We consider $\gamma\!\in\![-15^\circ, 15^\circ]$ as our primary safety constraint, restraining the vehicle from extreme vertical speed, and an additional lower-bound on airspeed at each tilt angle varying from 5m/s (near-hover) to 50m/s (cruise). The initial safe set (i.e. the target set) is designed based on the LQR funnel around the transition trim corridor of the vehicle, detailed in \cite{lombaerts2022trim}. If the vehicle can reach this trim corridor, it can successfully recover to the near-hover trim state. The associated LQR control at each trim point on the corridor serves as the backup controller $\pibackup$ that maintains the target set forward invariant. For the exploration policy $\piexploration$, we apply a control input with Gaussian noise centered around a mean value randomly selected for each experiment. The mean bias term incentivizes trajectories to explore various directions. We use $\ntraj\!=\!50$ for $k\!=\!0$, and $\ntraj\!=\!20$ for subsequent iterations, with each trajectory length $\text{T}\!=\!1$s and sampling time $\Delta t\!=\!0.01$s.

The DDH solution and the numerical solver of the DDH-based HJ-VI are implemented in PyTorch, based on the level set method in \cite{mitchell2002application}. Using GPU for parallel computation, each iteration takes around 40 minutes on Nvidia RTX 4090, for the state grid with size $(191 \times 50 \times 101)$, time horizon $t=1$s, and the number of data points $N$ around 5,000.

\textit{Results.} The data-driven safe sets resulting from conducting the experiments under Algorithm \ref{alg:experiment-design} are visualized in Fig. \ref{fig:xv15-result}. After 20 iterations, we recover 51.6\% of the volume of the true safe set within the verified data-driven safe set, and each data-driven safe set is a successful inner-approximation. We also observe that the safe exploration policy designed in \eqref{eq:safe_exploration_policy} guarantees the safety of all experiments. Finally, by applying the data reduction technique, we reduced 43,000 data points sampled from trajectories to 4,294 points in the final result.

\section{Conclusion \& Future Work}
\label{sec:conclusion}

We propose a direct data-driven framework for constructing safe sets from trajectory data. This framework produces safe sets that are guaranteed to be subsets of the true safe set while only requiring knowledge of a Lipschitz constant of the dynamics. We also present an approach to iteratively expand these safe sets while maintaining safety.
The core of this framework and our main contribution is the data-driven Hamiltonian (DDH), a data-driven lower bound of the Hamiltonian used in Hamilton-Jacobi (HJ) reachability analysis. Without needing an explicit dynamics model, the DDH provides a new paradigm for how model-based analysis and prior knowledge can be integrated with data-driven approaches in order to ensure safety. 

Our proposed approach has several limitations. The numerical methods used in this work for the reachability computation suffer from the curse of dimensionality, and our DDH scales linearly with the number of data points, which can lead to the problem being intractable in higher dimensions or on large datasets. 
While we mitigate this through parallelization on GPUs and data reduction, further studies are required to improve the scalability of the computation.
The DDH also relies on knowledge of a Lipschitz constant, whose tight value may be challenging to estimate in advance.
Finally, our switching filter is rudimentary and may have issues in practice like chattering. In future work, we plan to investigate the effects of the design choices in the iterative safe set expansion algorithm, explore the use of more advanced safety filters, and extend the DDH to other control problems.

\section*{Acknowledgements}
We thank Shaun Mcwherter and Thomas Lombaerts at NASA for the insightful discussions.

\section*{Appendix: Proof of Theorem \ref{thm:main}}
% We consider the dataset given as $\{x_i, u_i, v_i\}_{i=1}^{N}$, where $v_i = f(x_i, u_i)$. 
For conciseness, we only conduct the proof for the $\brt$ BRT value functions. Proof for the $\reachavoidbrt$ value function can be done similarly. We consider three dynamical systems and their corresponding BRT value functions: 
\begin{enumerate}[leftmargin=1.25em, labelindent=\parindent, listparindent=\parindent, labelwidth=0em]
    \item The original dynamics $\traj(\cdot)$ in \eqref{eq:dynsys} and \eqref{eq:dynsys_cbf}, and its value function $V(x, t)$ in \eqref{eq:brt-value}.
    \item The original dynamics but whose control is confined to the ones in the dataset,
%     We define the trajectory of this dynamics as
    $\dot{\tildetraj}(s)\!=\!f(\tildetraj(s), \tildectrl(s)), \tildetraj(-t) = x,$
with $\tildectrl(s)\!\in\!\tildeU \!:=\!\{u^i\}_{i=1}^{N}$. The BRT value function is given as $\tildeV(x, t) = \sup_{\tildectrl(\cdot)} \min_{s \in [-t, 0]} \unsafefcn(\tildetraj(s))$.
\item A fictitious dynamics that captures a differential Stackelberg game between the leader $\ctrlv$ and the follower $\ctrlw$, whose trajectory is defined as
\begin{equation}
\label{eq:dynsys_hat}
    \dot{\hattraj}(s) = \ctrlv(s) + \ctrlw(s), \quad \hattraj(-t) = x.
\vspace{-0.5em}
\end{equation}
The leader's action is confined by $\ctrlv(s)\!\in\!\ctrlsetv\!:=\!\{v_i\}_{i=1}^{N}$, and the follower's action is confined based on the current state and the leader's action, $\ctrlw(s) \in W(\hattraj(s), \ctrlv(s))$. The follower's action set is defined as the uncertainty sets for DDH in \eqref{eq:uncertainty-set}, i.e.,
$W(x, v_j) \equiv \uncertainset(x; x_j)$, when $\ctrlv(s) = v_j \in \ctrlsetv$. In other words, the leader selects the velocity in the dataset, and the follower selects the uncertainty vector within the uncertainty set associated with each data.

The BRT value function of this game is defined as 
\vspace{-0.5em}
\begin{equation}
\label{eq:brt-value-hat}
\hatV(x, t) := \inf_{\xi_w} \sup_{\ctrlv(\cdot)} \min_{s \in [-t, 0]} \unsafefcn(\hattraj(t)),
\vspace{-0.5em}
\end{equation}
where $\xi_w$ is the follower's non-anticipative strategy \cite{evans_hj}, in response to the leader, a mapping from $\ctrlv(\cdot)$ to $\ctrlw(\cdot)$.
\vspace{0.5em}
\end{enumerate}

\noindent Theorem \ref{thm:main} is proved by showing that a) $\hatV$ in \eqref{eq:brt-value-hat} is the unique viscosity solution of the HJ-VI in \eqref{eq:hj_vi_data} (Theorem \ref{thm:viscosity}), and b) $\hatV(x, t) \le \tildeV(x, t) \le V(x, t)$ (Theorem \ref{thm:comparison}). {\hfill $\square$}
% Intuitively, \eqref{eq:brt-value-hat} is the formalism of the data-driven value function based on DDH as a differential game problem, and $\tildeV$ is the value function when the control is restricted to what we observed in the dataset. 

\begin{theorem}[Viscosity solution theorem]
\label{thm:viscosity}
$\hatV$ defined in \eqref{eq:brt-value-hat} is the unique viscosity solution to the HJ-VI in \eqref{eq:hj_vi_data}.
\end{theorem}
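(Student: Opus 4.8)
The plan is to follow the classical recipe for value functions of differential games: first establish a dynamic programming principle (DPP) for the game value $\hatV$ in \eqref{eq:brt-value-hat}, then use the DPP to show that $\hatV$ is simultaneously a viscosity sub- and supersolution of the variational inequality \eqref{eq:hj_vi_data}, and finally invoke a comparison principle for that VI to conclude that $\hatV$ is its unique viscosity solution. This mirrors the well-posedness theory for safety and reach--avoid HJ-VIs with running-minimum payoffs (cf.\ \cite{fisac2015reach, bardi1997optimal}), specialized to the fictitious game dynamics \eqref{eq:dynsys_hat}. The genuinely nonstandard ingredient is the Stackelberg information structure between the leader $\ctrlv$ and the follower $\ctrlw$ together with the \emph{leader-dependent} follower constraint set $W(x, v_j) \equiv \uncertainset(x; x_j)$, which is precisely what will make the game Hamiltonian come out equal to the DDH \eqref{eq:ddh_general}.

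\textbf{Well-posedness and DPP.} First I would check that the game is well posed on the compact computation domain: there the reachable velocity set $\set{v_i} \oplus \uncertainset(x; x^i)$ is bounded, and $x \mapsto \uncertainset(x; x^i)$ is Lipschitz for every concrete instantiation (the radii of the $l_2$-ball and hyperrectangle uncertainty sets are Lipschitz in $x$), so for any admissible leader control $\ctrlv(\cdot) \in \set{v_i}$ and any non-anticipative follower strategy $\xi_w$ the trajectory $\hattraj(\cdot)$ of \eqref{eq:dynsys_hat} exists, is unique, and depends continuously on the initial state; combined with Lipschitzness of $\unsafefcn$ this makes $\hatV$ bounded and locally Lipschitz in $(x,t)$, with terminal value $\hatV(x, 0) = \unsafefcn(x)$. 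I would then prove the DPP: for every $0 < \delta \le t$,
\[
\hatV(x, t) = \inf_{\xi_w} \sup_{\ctrlv(\cdot)} \min\!\Big\{ \min_{s \in [-t, -t+\delta]} \unsafefcn(\hattraj(s)),\ \hatV\big(\hattraj(-t+\delta),\, t - \delta\big) \Big\},
\]
by the standard concatenation argument for non-anticipative strategies --- splicing an $\varepsilon$-optimal follower strategy on $[-t, -t+\delta]$ with one that is $\varepsilon$-optimal for the residual game launched from $\hattraj(-t+\delta)$ --- together with a measurable-selection step; the running-minimum form of the payoff is handled exactly as in \cite{fisac2015reach}.

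\textbf{From the DPP to the VI.} Taking $s = -t$ in \eqref{eq:brt-value-hat} already gives the obstacle inequality $\hatV(x, t) \le \unsafefcn(x)$, so the first branch of \eqref{eq:hj_vi_data} is always nonnegative, and in particular the supersolution condition $\unsafefcn(x) - \hatV(x,t) \ge 0$ holds outright. For the Hamiltonian branch, at any point where a smooth test function $\phi$ touches $\hatV$ (from above for the subsolution case, where additionally $\hatV(x,t) < \unsafefcn(x)$; from below for the supersolution case), I would substitute $\phi$ into the DPP, divide by $\delta$, and send $\delta \to 0$. Because the leader's control set $\set{v_i}$ is finite, its infinitesimally optimal move is to hold some constant value $v^i$, and the non-anticipative follower, seeing that choice, reacts with the worst $\hatvi \in v^i \oplus \uncertainset(x; x^i)$; hence the limit is $-D_t \phi + \widehat{H}(x, D_x \phi) \le 0$ (resp.\ $\ge 0$), with $\widehat{H}(x, p) = \max_i \min_{\hatvi \in v^i \oplus \uncertainset(x; x^i)} p^\top \hatvi$ exactly as in \eqref{eq:ddh_general}. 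Combining the obstacle and Hamiltonian inequalities yields that $\hatV$ is a viscosity sub- and supersolution of \eqref{eq:hj_vi_data}.

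\textbf{Uniqueness and the main obstacle.} Since $\widehat{H}$ is continuous in $p$ and Lipschitz in $x$ uniformly on bounded $p$-sets (each $v^i$ is constant and the support function of $\uncertainset(x; x^i)$ is Lipschitz in $x$), and $\unsafefcn$ is Lipschitz, the VI \eqref{eq:hj_vi_data} with terminal data $\unsafefcn$ admits a comparison principle via the usual doubling-of-variables argument adapted to obstacle problems \cite{bardi1997optimal, fisac2015reach}; uniqueness of the viscosity solution follows, and by the previous two steps that solution is $\hatV$. I expect the main obstacle to be the Hamiltonian identification in the previous step: making rigorous that the Elliott--Kalton lower value $\inf_{\xi_w}\sup_{\ctrlv}$ \cite{evans_hj} with a \emph{leader-dependent} follower set produces the $\max_i \min_{\ctrlw}$ ordering --- i.e., that the non-anticipative follower effectively best-responds to the leader's instantaneous velocity $v^i$ --- so that the game Hamiltonian is precisely the DDH and not, say, its $\min$--$\max$ counterpart. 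Everything else (the concatenation proof of the DPP, the passage to sub- and supersolutions, and the comparison principle) consists of routine adaptations of classical arguments.
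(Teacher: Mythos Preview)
Your proposal is correct and follows essentially the same route as the paper: identify the game Hamiltonian of the fictitious dynamics \eqref{eq:dynsys_hat} with the DDH, reduce to the classical Evans--Souganidis viscosity-solution machinery for differential games \cite{evans_hj} (which proceeds via the DPP and touching-test-function arguments you outline), and invoke a comparison/uniqueness result for the obstacle-type HJ-VI. The paper is terser---it cites \cite{evans_hj} directly and spells out only the one lemma that handles the nonstandard leader-dependent follower constraint $W(x,v_j)=\uncertainset(x;x_j)$---which is exactly the ``main obstacle'' you correctly flag; for uniqueness the paper appeals to \cite{barron1989bellman} rather than a doubling-of-variables comparison, but this is a citation choice, not a different argument.
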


\begin{proof}
We first see that the Hamiltonian of $\hatV$ in \eqref{eq:brt-value-hat} is indeed the DDH:
\vspace{-0.5em}
\begin{align*}
    \max_{v \in \ctrlsetv}\! & \min_{w \in W(x, v)} p^\top (v + w) = \!\! \max_{i \in \{1, \cdots, N\}} \! \min_{w \in W(x, v_i)} p^\top \! (v_i + w) \\
    & \quad = \!\!\max_{i \in \{1, \cdots, N\}} \!\min_{\widehat{v_i} \in v_i \oplus \uncertainset(x; x_i)} p^\top \widehat{v_i} = \ddh(x, p),
\vspace{-0.5em}
\end{align*}    
where $\widehat{v_i}\!=\!v_i\!+\!w$. The rest of the proof can be adopted from the viscosity solution theorem for differential games in \cite[Thm. 4.1]{evans_hj}, as similarly done in \cite{choi2021robust, fisac2015reach} for the HJ-VIs. The noticeable differences in our assumptions from those of \cite[Thm. 4.1]{evans_hj} are 1) $\ctrlsetv$ is not a compact set in our case, and 2) the follower's action space $W(x, v)$ is conditioned on the leader's action $v$. This requires the adoption of \cite[Lemma 4.3]{evans_hj}, used for the proof of \cite[Thm. 4.1]{evans_hj}, to our settings as Lemma \ref{lemma:value-to-ham} below. The lemma is the crucial step that translates the $\inf$-$\sup$ leader and follower objectives in $\hatV$ to $\max$-$\min$ objectives in $\ddh$. The uniqueness follows from \cite[Thm. 4.2]{barron1989bellman}.
\end{proof}

\begin{lemma}
\label{lemma:value-to-ham}
    (Adoption of \cite[Lemma 4.3]{evans_hj}) For $\phi (x, t) \in C^1$,

\noindent (a) If $\exists \theta > 0$, $\exists(x_0, t_0) \in \R^n \times \R_{<0}$ such that
\vspace{-0.5em}
\begin{equation}
\max_{v \in \ctrlsetv} \min_{w \in W(x, v)} D_t \phi(x, t) + D_x \phi(x, t)^\top (v + w) \le -\theta,    
\vspace{-0.5em}
\label{eq:lemma-cond-a}
\end{equation}
then there exists a small enough $\delta >0$, and the follower's non-anticipative strategy $\xi_w$ such that for all $\ctrlv(\cdot)$,
\vspace{-0.5em}
\begin{equation}
    \phi(\hattraj(t_0 + \delta), t_0 + \delta) - \phi(x_0, t_0) \le -\frac{\theta}{2}\delta.
\label{eq:lemma-result-a}
\vspace{-0.5em}
\end{equation}

\noindent (b) If $\exists \theta > 0$, $\exists(x_0, t_0)  \in \R^n \times \R_{<0}$ such that
\vspace{-0.5em}
\[\max_{v \in \ctrlsetv} \min_{w \in W(x, v)} D_t \phi(x, t) + D_x \phi(x, t)^\top (v + w) \ge \theta,
\vspace{-0.5em}
\]
there exists a small enough $\delta >0$, such that for all follower's non-anticipative strategy $\xi_w$, there exists the leader's control signal $\ctrlv(\cdot)$ such that
$
    \phi(\hattraj(t_0 + \delta), t_0 + \delta) - \phi(x_0, t_0) \ge \frac{\theta}{2}\delta.
$
\end{lemma}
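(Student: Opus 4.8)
The plan is to reduce Lemma~\ref{lemma:value-to-ham} to the corresponding step for differential games, \cite[Lemma~4.3]{evans_hj}, while absorbing the two structural departures flagged above the statement: the leader's action set $\ctrlsetv=\{v_i\}_{i=1}^{N}$ is not a generic control set, and the follower's action set $W(x,v)$ depends both on the state and on the leader's choice. The first is harmless because $\ctrlsetv$ is \emph{finite}: every $\max$ over $\ctrlsetv$ is attained and continuity in the leader's action is automatic. For the second, I would \emph{reparametrize} the follower: in every instantiation of Section~\ref{subsec:ddh-implementation} one can write $W(x,v_i)=M^i(x)\mathcal{B}$, i.e.\ the image of a fixed compact convex set $\mathcal{B}$ (a unit ball or a unit box) under a linear map $M^i(x)$ that is Lipschitz in $x$. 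Writing the follower's control as $b(s)\in\mathcal{B}$, the game dynamics \eqref{eq:dynsys_hat} started from $\hattraj(t_0)=x_0$ becomes $\dot{\hattraj}(s)=\ctrlv(s)+M^{i(s)}(\hattraj(s))b(s)$, a standard two-player game with \emph{state-independent} compact action sets ($\{1,\dots,N\}$ and $\mathcal{B}$) and right-hand side jointly continuous and Lipschitz in $\hattraj$. Since $D_t\phi$ is independent of the actions, the quantity in \eqref{eq:lemma-cond-a} is $m(x,t):=D_t\phi(x,t)+\max_i\min_{b\in\mathcal{B}}D_x\phi(x,t)^\top(v_i+M^i(x)b)=D_t\phi(x,t)+\ddh(x,D_x\phi(x,t))$, using the identity for $\ddh$ already derived in the proof of Theorem~\ref{thm:viscosity}; as $\phi\in C^1$ and $\mathcal{B}$ is compact, $m$ is continuous (a minimum over the fixed compact $\mathcal{B}$ of a continuous function, then a finite maximum).

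\emph{Part (a).} The hypothesis reads $m(x_0,t_0)\le-\theta$, so by continuity $m\le-\theta/2$ on a neighborhood of $(x_0,t_0)$. I would take the follower's non-anticipative strategy $\xi_w$ to be the one that, at each time $s$, plays a measurable selection of $\argmin_{b\in\mathcal{B}}\bigl[D_t\phi(\hattraj(s),s)+D_x\phi(\hattraj(s),s)^\top(\ctrlv(s)+M^{i(s)}(\hattraj(s))b)\bigr]$. Existence of such a causal $\xi_w$ together with a well-defined closed-loop trajectory follows from the construction in \cite{evans_hj}: discretize time, freeze the state on each mesh interval, select the follower's action measurably in the leader's (measurable) control, let the mesh size tend to $0$, and verify the accumulated error is $o(\delta)$. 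Since the right-hand side is bounded near $(x_0,t_0)$, a small enough $\delta>0$ keeps $(\hattraj(s),s)$ inside that neighborhood for all $s\in[t_0,t_0+\delta]$, where $\tfrac{d}{ds}\phi(\hattraj(s),s)=D_t\phi+D_x\phi^\top\dot{\hattraj}=\min_{b\in\mathcal{B}}[\,\cdots\,]\le m(\hattraj(s),s)\le-\theta/2$. Integrating over $[t_0,t_0+\delta]$ gives \eqref{eq:lemma-result-a}.

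\emph{Part (b).} Here the finite maximum in $m(x_0,t_0)\ge\theta$ is attained at some index $i^*$, hence $D_t\phi(x_0,t_0)+D_x\phi(x_0,t_0)^\top(v_{i^*}+M^{i^*}(x_0)b)\ge\theta$ for \emph{every} $b\in\mathcal{B}$; by the same joint continuity and compactness of $\mathcal{B}$ there is a neighborhood of $(x_0,t_0)$ on which $D_t\phi+D_x\phi^\top(v_{i^*}+M^{i^*}(x)b)\ge\theta/2$ for all $b\in\mathcal{B}$. The leader's response to an arbitrary follower strategy $\xi_w$ is simply the constant control $i(s)\equiv i^*$ (equivalently $\ctrlv(s)\equiv v_{i^*}$); whatever follower control $b(\cdot)\subset\mathcal{B}$ this produces, the bounded-speed argument again confines $(\hattraj(s),s)$ to that neighborhood for $s\in[t_0,t_0+\delta]$ with $\delta>0$ small, and there $\tfrac{d}{ds}\phi(\hattraj(s),s)=D_t\phi+D_x\phi^\top(v_{i^*}+M^{i^*}(\hattraj(s))b(s))\ge\theta/2$. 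Integration yields the asserted lower bound.

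\emph{Main obstacle.} The delicate point is the construction in (a) of a genuinely non-anticipative follower strategy that realizes the pointwise $\argmin$ while keeping the closed-loop trajectory well posed; I expect to carry it out exactly as in \cite[Lemma~4.3]{evans_hj} --- time discretization plus a measurable-selection theorem (Kuratowski--Ryll-Nardzewski / Filippov) --- the only new feature being that the follower's constraint set moves with the state, which the reparametrization $W(x,v_i)=M^i(x)\mathcal{B}$ has already folded into a state-Lipschitz right-hand side over a fixed compact action set. A secondary, routine caveat: the continuity of $m$ invoked above uses the concrete forms of $\uncertainset$ from Section~\ref{subsec:ddh-implementation}; under only Assumption~\ref{assumption:uncertainty} (closed, possibly unbounded values) one would additionally assume $\uncertainset(\cdot\,;x^i)$ to be Hausdorff-continuous with compact values, which I would state as a standing regularity hypothesis.
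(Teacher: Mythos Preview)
Your plan is correct, but it is considerably heavier than the paper's argument and hides the one-line simplification that finiteness of $\ctrlsetv=\{v_i\}_{i=1}^N$ buys.

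For part~(a) the paper does \emph{not} reparametrize $W(x,v_i)$, invoke measurable selection, or discretize in time. It simply observes that \eqref{eq:lemma-cond-a} says, for each of the finitely many indices $i$, that there is a single vector $\widehat{w}(v_i)\in W(x_0,v_i)$ with $D_t\phi(x_0,t_0)+D_x\phi(x_0,t_0)^\top(v_i+\widehat{w}(v_i))\le-\theta$. The follower's strategy is then the lookup table $\xi_w[\ctrlv](s):=\widehat{w}(\ctrlv(s))$: play the \emph{same} fixed vector whenever the leader plays $v_i$, regardless of state or time. This is trivially non-anticipative and measurable, the closed loop $\dot{\hattraj}=\ctrlv+\widehat{w}(\ctrlv)$ is well posed without any limiting procedure, and $C^1$-continuity of $\phi$ upgrades $-\theta$ to $-\theta/2$ on a short interval; integrating gives \eqref{eq:lemma-result-a}. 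For part~(b) the paper just defers to \cite{evans_hj}; your constant-leader argument is exactly what that amounts to.

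What your route buys is a cleaner treatment of the state-dependent constraint: by writing $W(x,v_i)=M^i(x)\mathcal{B}$ and letting the follower act in the fixed set $\mathcal{B}$, you guarantee $\xi_w[\ctrlv](s)\in W(\hattraj(s),\ctrlv(s))$ for every $s$, whereas the paper's frozen witness $\widehat{w}(v_i)\in W(x_0,v_i)$ need not lie in $W(\hattraj(s),v_i)$ once the trajectory moves (e.g.\ the ball $\ball(\Lx\|x-x^i\|)$ can shrink). The paper tacitly ignores this small feasibility drift; your reparametrization removes it, at the price of restricting to the concrete uncertainty sets of Section~\ref{subsec:ddh-implementation} and carrying the full Evans machinery that the finite-$\ctrlsetv$ shortcut was meant to avoid.
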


\begin{proof}
\noindent Proof of (a): Due to condition \eqref{eq:lemma-cond-a}, for each $v_i \in \ctrlsetv$, there exists $w = \widehat{w}(v_i)$ such that
\vspace{-0.5em}
\begin{equation*}
    D_t \phi(x_0, t_0) + D_x \phi(x_0, t_0)^\top (v_i + \widehat{w}(v_i)) \le -\theta.
\vspace{-0.5em}
\end{equation*}
Due to the $C^1$ property of $\phi$, we have
\vspace{-0.5em}
\begin{equation}
    D_t \phi(\hattraj(s), s) + D_x \phi(\hattraj(s), s)^\top (v_i + \widehat{w}(v_i)) \le -\frac{\theta}{2},
    \label{eq:lemma-proof1}
\vspace{-0.5em}
\end{equation}
for small enough $\delta > 0$, and $s \in [t_0, t_0 + \delta]$. Consider the follower's disturbance strategy $\xi_w$ that satisfies $\xi_w[\ctrlv](s) = \widehat{w}(\ctrlv(s))$ for $s \in [t_0, t_0 + \delta]$. From \eqref{eq:lemma-proof1}, we have
\vspace{-0.5em}
\begin{equation*}
    D_t \phi(\hattraj(s), s) + D_x \phi(\hattraj(s), s)^\top (\ctrlv(s) + \xi_w[\ctrlv](s)) \le -\frac{\theta}{2},
    \vspace{-0.5em}
\end{equation*}
$\forall s \!\in \![t_0, t_0\!+\!\delta]$. Integration from $t_0$ to $t_0\!+\!\delta$ results in \eqref{eq:lemma-result-a}. Proof of (b) can be adopted directly from \cite{evans_hj}.    
\end{proof}

\begin{theorem}
\label{thm:comparison}
$\hatV(x, t) \le \tildeV(x, t) \le V(x, t)$ $\forall x\!\in\!\R^n$, $t\!\ge\! 0$.
\end{theorem}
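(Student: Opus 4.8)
The plan is to prove the two inequalities separately; the right one, $\tildeV \le V$, is essentially free, and all of the content lies in $\hatV \le \tildeV$. For $\tildeV \le V$: since $\tildeU = \{u^i\}_{i=1}^{N} \subseteq \uset$, every control signal admissible in the definition of $\tildeV$ is also admissible in the definition of $V$, and feeding the same signal into the common dynamics $\dot{\traj}(s) = f(\traj(s), \cdot)$ with $\traj(-t) = x$ produces the identical trajectory, hence the identical objective value $\min_{s \in [-t,0]} \unsafefcn(\traj(s))$. Thus $\tildeV(x,t)$ is a supremum of the same objective over a subset of the feasible set defining $V(x,t)$, so $\tildeV(x,t) \le V(x,t)$.

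For $\hatV \le \tildeV$, I would exhibit one particular follower strategy that forces the fictitious game trajectory to coincide with a restricted-control trajectory. It is cleanest to regard the leader as choosing an index signal $i(\cdot) : [-t,0] \to \{1, \dots, N\}$, so that $\ctrlv(s) = v^{i(s)}$ and the follower's admissible set is $W(\hattraj(s), v^{i(s)}) = \uncertainset(\hattraj(s); x^{i(s)})$; this phrasing also removes the ambiguity that would arise if two data velocities $v^i$ happened to coincide. Define the \emph{honest} follower feedback $\kappa(x, i) := f(x, u^i) - v^i$. By Assumption \ref{assumption:uncertainty} applied with $\truev = f(x, u^i)$, we have $\kappa(x, i) = \truev - v^i \in \uncertainset(x; x^i)$ for every $x$, so $\kappa$ is an admissible follower action. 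Substituting it into \eqref{eq:dynsys_hat} gives the closed loop $\dot{\hattraj}(s) = v^{i(s)} + \kappa(\hattraj(s), i(s)) = f(\hattraj(s), u^{i(s)})$, which is Lipschitz in the state and hence has a unique forward-complete solution from $\hattraj(-t) = x$; that solution is exactly the trajectory $\tildetraj$ of the restricted-control system under $\tildectrl(s) := u^{i(s)} \in \tildeU$. Since $\hattraj(s)$ depends only on $i(\cdot)$ restricted to $[-t, s]$, the induced map $\ctrlv(\cdot) \mapsto \ctrlw(\cdot) = \kappa(\hattraj(\cdot), i(\cdot))$ is a legitimate non-anticipative strategy, call it $\xi_w^{\circ}$.

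Because $\xi_w^{\circ}$ is just one admissible follower strategy, $\hatV(x,t) = \inf_{\xi_w} \sup_{\ctrlv(\cdot)} \min_{s \in [-t,0]} \unsafefcn(\hattraj(s)) \le \sup_{\ctrlv(\cdot)} \min_{s \in [-t,0]} \unsafefcn(\hattraj^{\xi_w^{\circ}}(s))$. By the construction above, the right-hand side equals $\sup_{i(\cdot)} \min_{s \in [-t,0]} \unsafefcn(\tildetraj(s))$ with $\tildectrl(s) = u^{i(s)}$, and each such $u^{i(\cdot)}$ is an admissible control for the $\tildeV$ problem, so this supremum is at most $\tildeV(x,t)$. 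Chaining, $\hatV(x,t) \le \tildeV(x,t) \le V(x,t)$, as claimed.

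The conceptual heart --- that the least adversarial (``honest'') follower simply replays the true dynamics, so the game can be no safer than the restricted-control problem --- is the short part. The main obstacle is the measure-theoretic bookkeeping around it: checking that $s \mapsto \kappa(\hattraj(s), i(s))$ is measurable and that the feedback-closed ODE is well posed, so that $\xi_w^{\circ}$ is genuinely a non-anticipative strategy in the sense of \cite{evans_hj}, and carefully handling the index-versus-velocity correspondence for the leader. (If one additionally wanted the equality $\sup_{i(\cdot)}(\cdots) = \tildeV$ rather than the inequality, a measurable-selection argument would be needed to realize an arbitrary admissible $\tildectrl$ as $u^{i(\cdot)}$ for measurable $i(\cdot)$; but only the inequality is used here.)
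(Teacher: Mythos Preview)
Your proposal is correct and follows essentially the same approach as the paper: both prove $\tildeV \le V$ by the inclusion $\tildeU \subseteq \uset$, and prove $\hatV \le \tildeV$ by constructing the ``honest'' follower strategy $\xi_w[\ctrlv](s) = f(\hattraj(s), u^{i(s)}) - v^{i(s)}$, which is admissible by Assumption~\ref{assumption:uncertainty} and makes the fictitious dynamics coincide with the restricted-control system. You are slightly more careful than the paper in explicitly passing to an index signal $i(\cdot)$ (avoiding ambiguity when data velocities coincide) and in only claiming the inequality $\sup_{\ctrlv}(\cdots) \le \tildeV$ rather than equality, which is all that is needed.
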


\begin{proof}
The second inequality is trivial by observing that $\tildeU \subset \uset$.
% , thus,
% \[
% \sup_{\tildectrl(\cdot)} \min_{s \in [-t, 0]} \unsafefcn(\tildetraj(s)) \le \sup_{\ctrl(\cdot)} \min_{s \in [-t, 0]} \unsafefcn(\traj(s)).
% \]
We consider a non-anticipative strategy of the follower, $\widetilde{\xi}_{w}$, selecting its action as:
\vspace{-0.5em}
\begin{equation}
    \widetilde{\xi}_{w}[\ctrlv(s)] \equiv f(\hattraj^{\widetilde{\xi}_{w}}(s), \hatctrl(s)) - \ctrlv(s),
\vspace{-0.5em}
\end{equation}
where $\hatctrl(s)=u_j$ for $\ctrlv(s)=v_j \in \ctrlsetv$ at each $s\in[-t, 0]$, and $\hattraj^{\widetilde{\xi}_{w}}(s)$ solves \eqref{eq:dynsys_hat} with $\ctrlw(s) \equiv \widetilde{\xi}_{w}[\ctrlv(s)]$.
Note that this strategy is non-anticipative since we don't use any information of the follower's action in $\tau \in (s, 0]$. More importantly, it is a \textit{feasible} strategy because for $\ctrlv(s) = v_j$,
\vspace{-0.5em}
\begin{equation*}
     f(\hattraj^{\widetilde{\xi}_{w}}(s), u_j) - v_j \in \uncertainset(\hattraj^{\widetilde{\xi}_{w}}(s); x_j) = W(\hattraj^{\widetilde{\xi}_{w}}(s), \ctrlv(s)).
\vspace{-0.5em}
\end{equation*}
Under this follower strategy $\xi_w^*$, \eqref{eq:dynsys_hat} becomes
\vspace{-0.5em}
\[
\dot{\hattraj}^{\widetilde{\xi}_{w}}(s) = \ctrlv(s) + \widetilde{\xi}_{w}[\ctrlv(s)] = f(\hattraj^{\widetilde{\xi}_{w}}(s), \hatctrl(s)).
\vspace{-0.5em}
\]
Intuitively, the follower always selects its action so that the fictitious dynamics \eqref{eq:dynsys_hat} become identical to the original dynamics. As a result, We can see that with $\tildectrl \equiv \hatctrl$, \vspace{-0.5em}
\begin{equation*}
\sup_{\ctrlv(\cdot)} \min_{s \in [-t, 0]} \unsafefcn(\hattraj^{\widetilde{\xi}_{w}}(t)) \equiv \tildeV(x, t),
\vspace{-0.5em}
\end{equation*}
Since $\hatV$ minimizes the identical cost function over all possible follower strategies, we get $\hatV(x, t) \le \tildeV(x, t)$.
\end{proof}

% \subsection{Random Polysys other results (if we have space)}

\balance

%%%%%%%%%%%%%%%%%%%%%%%%%%%%%%%%%%%%%%%%%%%%%%%%%%%%%%%%%%%%%%%%%%%%%%%%%%%%%%%%

\bibliographystyle{IEEEtran} % use IEEEtran.bst style
\bibliography{IEEEabrv,references}

\end{document}